\definecolor{darkgreen}{rgb}{0,0.5,0}
\definecolor{light-gray}{gray}{0.9}
\newcommand{\Am}{{\cal A}}
\newcommand{\dis}{{d}}
\newcommand{\rep}{{rep}}
\newcommand{\cc}{{\tt cc}\xspace}
\newcommand{\far}{{\tt far}\xspace}
\newcommand{\card}{{\tt R}}
\newcommand{\sumcard}{{\tt RF}}
\newcommand{\union}{\cup}
\newcommand{\ccalg}{\textsc{CC}\xspace}
\newcommand{\ccalgb}{\textsc{CC-B}\xspace}
\newcommand{\ccalgbl}{\textsc{CC-BL}\xspace}
\newcommand{\ccalgbli}{\textsc{CC-BLI}\xspace}
\newcommand{\ccalgblih}{\textsc{CC-BLIH}\xspace}
\newcommand{\cv}{{\mathbf c}}
\newtheorem{theorem}{Theorem}[section]
\newtheorem{lemma}[theorem]{Lemma}
\newtheorem{corollary}[theorem]{Corollary}
\newtheorem{definition}[theorem]{Definition}
\begin{document}

%\conferenceinfo{SIGKDD}{'13 Chicago, IL, USA}

\title{Incremental Algorithms for Network Management and Analysis based on Closeness Centrality}

%\numberofauthors{4} %  in this sample file, there are a *total*

\author{Ahmet Erdem Sar{\i}y\"{u}ce$^{1,2}$, Kamer
  Kaya$^1$, Erik Saule$^1$, \"{U}mit V. \c{C}ataly\"{u}rek$^{1,3}$\\\\
{\normalsize Depts. $^1$Biomedical Informatics, $^2$Computer
  Science and Engineering, $^3$Electrical and Computer Engineering}\\
  {\normalsize The Ohio
  State University}\\ \small Email:\textit{sariyuce.1@osu.edu, \{kamer,esaule,umit\}@bmi.osu.edu}}

\maketitle 

\begin{abstract} Analyzing networks requires complex algorithms
to extract meaningful information. Centrality metrics have shown to be
correlated with the importance and loads of the nodes in network
traffic. Here, we are interested in the problem of centrality-based
network management. The problem has many applications such as
verifying the robustness of the networks and controlling or improving
the entity dissemination. It can be defined as finding a small set of
topological network modifications which yield a desired closeness
centrality configuration. As a fundamental building block to tackle
that problem, we propose incremental algorithms which efficiently
update the closeness centrality values upon changes in network
topology, i.e., edge insertions and deletions. Our algorithms are
proven to be efficient on many real-life networks, especially on
small-world networks, which have a small diameter and a spike-shaped
shortest distance distribution. In addition to closeness centrality,
they can also be a great arsenal for the shortest-path-based
management and analysis of the networks.  We experimentally validate
the efficiency of our algorithms on large networks and show that they
update the closeness centrality values of the temporal DBLP-coauthorship
network of 1.2 million users 460 times faster than it would take to
compute them from scratch. To the best of our knowledge, this is
the first work which can yield practical large-scale network management
based on closeness centrality values.
\end{abstract}

\category{E.1}{Data}{Graphs and Networks}
\category{G.2.2}{Discrete Mathematics}{Graph Theory}[Graph algorithms]

\terms{Algorithms, Performance, Experimentation}

\keywords{Closeness centrality, centrality management, dynamic networks, small-world networks}

\section{Introduction}

Centrality metrics, such as closeness or betweenness, quantify how central a
node is in a network. They have been successfully used to carry analysis for
various purposes such as structural analysis of knowledge
networks~\cite{Pham10,Shi10}, power grid contingency analysis~\cite{Jin10},
quantifying importance in social networks~\cite{LeMerrer2009}, analysis of
covert networks~\cite{Krebs02}, decision/action networks~\cite{simsekb08},
and even for finding the best store locations in cities~\cite{Porta09}. Several
works which have been conducted to rapidly compute these metrics exist in the
literature. The algorithm with the best asymptotic complexity to compute
centrality metrics~\cite{brandes2001} is believed to be asymptotically
optimal~\cite{Kintali08}. Research have focused on either approximation
algorithms for computing centrality metrics~\cite{Chan09,Eppstein01,Okamoto08}
or on high performance computing techniques~\cite{Madduri2009,Shi11}. Today, it
is common to find large networks, and we are always in a quest for better
techniques which help us while performing centrality-based
analysis on them.

%but also for other graph analysis~\cite{Fan2011}. 
%especially on dynamic graphs~\cite{Lee2012,Green2012}. 
%Although graph and network analysis have been there for almost fifty years, 
%they have become very popular only in the last decade due to
%the increasing popularity of online ``social''applications/websites where 
%people interact with each other and the interactions are logged. Hence, the data 
%is now available to the research and industrial communities and it is large. 
%Efficient algorithms to analyze it and understand its structure are required to 
%solve the problems encountered in many applications from both industry and 
%research.

%The nodes with high centralities help network analysts to answer questions such
%as ``Who is more popular?" or ``Who controls the information flow in a
%network?". Hence, identifying the central nodes in social networks is one of the
%important problems, e.g.,~\cite{brandes2001,LeMerrer2009,Pham10,Shi10}.

When the network topology is modified, ensuring the correctness of the
centralities is a challenging task. This problem has been studied for
dynamic and streaming networks~\cite{Green2012,Lee2012}. Even for some
applications involving a static network such as the contingency analysis
of power grids and robustness evaluation of networks, to be prepared
and take proactive measures, we need to know how the centrality values
change when the network topology is modified by an adversary and outer
effects such as natural disasters.

A similar problem arises in network management for which not only
knowing but also setting the centrality values in a controlled manner
via topology modifications is of concern to speed-up or contain the
entity dissemination. The problem is hard: there are $m$ candidate
edges to delete and $\mathcal{O}(n^2)$ candidate edges to insert where
$n$ and $m$ are the number of nodes and edges in the network,
respectively. Here, the main motivation can be calibrating the
importance/load of some or all of the vertices as desired, matching
their loads to their capacities, boosting the content spread, or
making the network immune to adversarial attacks. Similar problems,
such as finding the most cost-effective way which reduces the entity
dissemination ability of a network~\cite{Phillips1993} or finding a
small set of edges whose deletion maximizes the shortest-path
length~\cite{Israeli02}, have been investigated in the literature. The
problem recently regained a lot of attention: A generic study which
uses edge insertions and deletions is done by
Tong~et~al.~\cite{Tong12}. They use the changes on the leading
eigenvalue to control/speed-up the dissemination process. Other recent
works investigate edge insertions to minimize the average shortest
path distance~\cite{Papagelis2011} or to boost the content
spread~\cite{Chaoji2012}. From the centrality point of view, there
exist studies which focus on maximizing the centrality of a node
set~\cite{Everett2005,Ishakian2012} or a single
node~\cite{Ishakian2012} by edge insertions. In generic
centrality-based network management problem, the desired centralities
of all the nodes need to be obtained or approximated with a small set
of topology modifications. As Figure~\ref{fig:gseq} shows, the effect
of a local topology modification is usually global. Furthermore,
existing algorithms for incremental centrality computation
are not efficient enough to be used in practice. Thus,
novel incremental algorithms are essential to quickly evaluate the
effects of topology modifications on centrality values.

\begin{figure}[htbp] 
\vspace*{-1ex}
\center
~~~\includegraphics[width=0.38\textwidth]{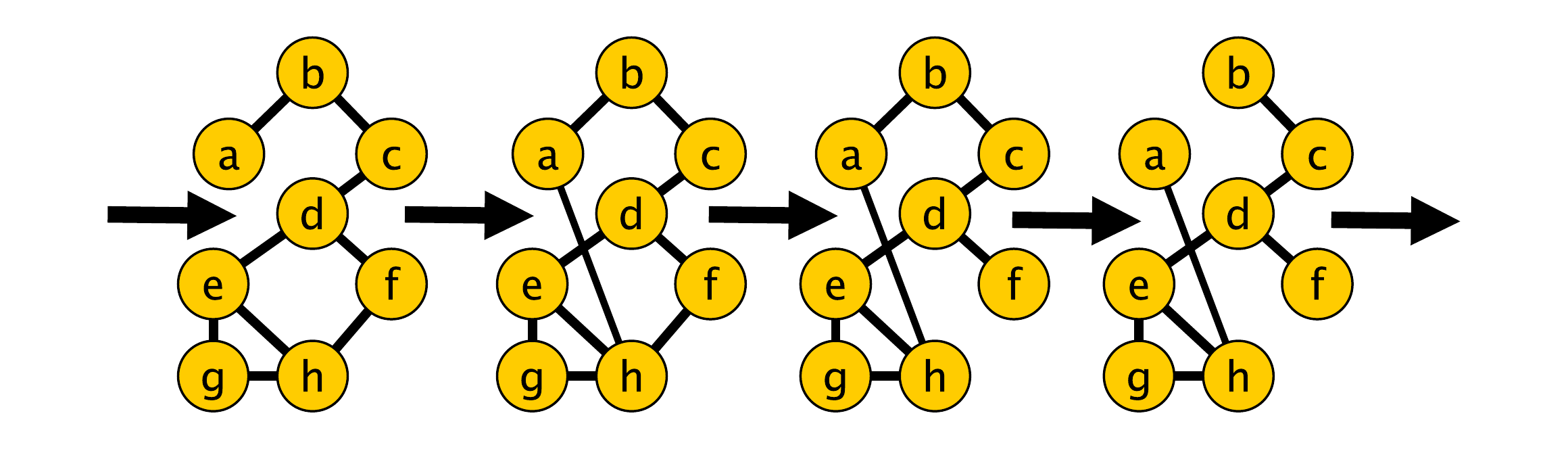}
\includegraphics[width=0.47\textwidth]{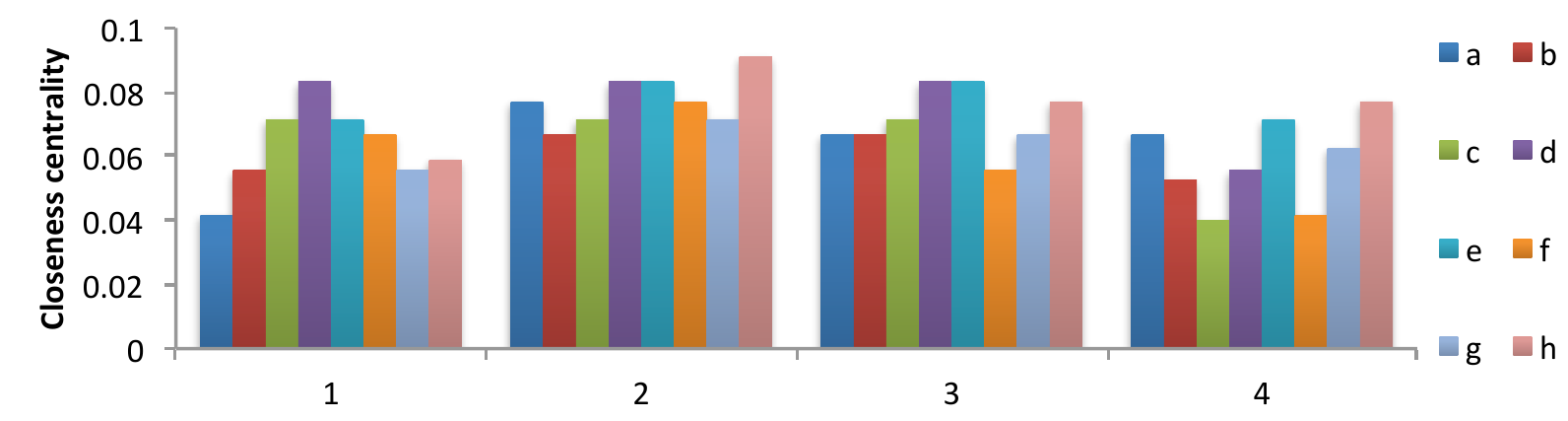}
\vspace*{-1ex}
\caption{\small A toy network with nine nodes, three consecutive edge~($ah$,
$fh$, and $ab$, respectively) insertions/deletions, and values of
closeness centrality.% for each case.
}  \label{fig:gseq} 
\vspace*{-1ex}
\end{figure}

Our contributions can be summarized as follows:
\begin{enumerate}[topsep=4pt, itemsep=0pt]
\item{To attack the variants of the centrality-based network
  management problem, we propose incremental algorithms which
  efficiently update the closeness centralities upon edge insertions
  and deletions.}

\item{The proposed algorithms can serve as a fundamental building
  block for other shortest-path-based network analyses such as the
  temporal analysis on the past network data,
  maintaining centrality on streaming networks, or
  minimizing/maximizing the average shortest-path distance via edge
  insertions and deletions.}

\item{Compared with the existing algorithms, our algorithms have a
  low-memory footprint making them practical and applicable to very
  large graphs. For random edge insertions/deletions to the Wikipedia
  users' communication graph, we reduced the centrality
  (re)computation time from 2 days to 16 minutes. And for the
  real-life temporal DBLP coauthorship network, we reduced the time
  from 1.3 days to 4.2 minutes.}

\item{The proposed techniques can easily be adapted to algorithms for
  approximating centralities. As a result, one can employ a more
  accurate and faster sampling and obtain better approximations.}
\end{enumerate}

The rest of the paper is organized as follows: Section~\ref{sec:bac}
introduces the notation and formally defines the closeness centrality
metric. Section~\ref{sec:manage} defines network management problems
we are interested. Our algorithms explained in detail in
Section~\ref{sec:main}. Existing approaches are described in
Section~\ref{sec:rel} and the experimental analysis is given in
Section~\ref{sec:exp}. Section~\ref{sec:con} concludes the
paper.

\section{Background}\label{sec:bac}

Let $G = (V,E)$ be a network modeled as a simple graph with $n = |V|$
vertices and $m = |E|$ edges where each node is represented by a vertex in
$V$, and a node-node interaction is represented by an edge in $E$. 
Let $\Gamma_G(v)$ be the set of vertices which are connected to $v$ in
$G$.

A graph $G' = (V', E')$ is a {\em subgraph} of $G$ if $V' \subseteq V$
and $E' \subseteq E$. A \textit{path} is a sequence of vertices such
that there exists an edge between consecutive vertices. A path between
two vertices $s$ and $t$ is denoted by $s \leadsto t$~(we sometimes use $s
 \stackrel{P}{\leadsto} t$ to denote a specific path $P$ with
 endpoints $s$ and $t$). Two vertices
$u, v \in V$ are \textit{connected} if there is a path from $u$ to
$v$.  If all vertex pairs are connected we say that $G$ is
\textit{connected}. If $G$ is not connected, then it is {\em
disconnected} and each maximal connected subgraph of $G$ is a {\em
connected component}, or a component, of $G$. We use $\dis_G(u,v)$ to
denote the length of the shortest path between two vertices $u, v$ in
a graph $G$. If $u = v$ then $\dis_G(u,v) = 0$. And if $u$ and $v$ are
disconnected, then $\dis_G(u,v) = \infty$.
 
Given a graph $G = (V,E)$, a vertex $v \in V$ is called an {\em
  articulation vertex} if the graph $G-v$ (obtained by removing $v$)
has more connected components than $G$. Similarly, an edge $e \in E$
is called a {\em bridge} if $G-e$ (obtained by removing $e$ from $E$)
has more connected components than $G$. $G$ is {\em biconnected} if it
is connected and it does not contain an articulation vertex. A maximal
biconnected subgraph of $G$ is a {\em biconnected component}.

\subsection{Closeness Centrality}

Given a graph $G$, the {\em farness} of a vertex $u$ is
defined as $$\far[u] = \sum_{\stackrel{v \in V}{\dis_G(u,v) \neq
    \infty}} \dis_G(u,v).$$ And the
closeness centrality of $u$ is defined as
\begin{equation}
\cc[u] = \frac{1}{\far[u]}. \label{eq:first}
\end{equation}
If $u$ cannot reach any vertex in the graph $\cc[u] = 0$.

For a sparse unweighted graph $G \ (V,E)$ with $|V| = n$ and $|E| =
m$, the complexity of \cc computation is $\mathcal{O}(n(m+n))$.  For
each vertex $s$, Algorithm~\ref{alg:cc} executes a Single-Source Shortest
Paths~(SSSP) algorithm.  It initiates a breadth-first search~(BFS)
from $s$, computes the distances to the other vertices, compute
$\far[s]$, the sum of the distances which are different than $\infty$.
And, as the last step, it computes $\cc[s]$. Since a BFS takes $\mathcal{O}(m+n)$
time, and $n$ SSSPs are required in total, the complexity follows.

\renewcommand{\baselinestretch}{0.70}
\begin{algorithm}[htbp]
\DontPrintSemicolon
\SetKwComment{tcp}{$\triangleright$}{} \small
\caption{\ccalg: Basic centrality computation}
\label{alg:cc}
\KwData{${G = (V,E)}$}  
\KwOut{$cc[.]$}
  \lnl{line:bfs}\For{{\bf each} $s \in V$} {    
  	\tcp{SSSP($G$, $s$) with centrality computation}
    $Q \leftarrow$ empty queue \;    
    $\dis[v] \gets \infty, \forall v \in V\setminus{\{s\}}$ \;
    $Q$.push($s$),\ $\dis[s] \gets 0$ \; 
    $\far[s] \gets 0$ \;
    \While{$Q$ is not empty} {
      $v \leftarrow Q$.pop()\;
      \For{{\bf all} $w \in \Gamma(v)$}{
        \If{$\dis[w] = \infty$}{
          $Q$.push($w$) \;
          $\dis[w] \gets \dis[v] + 1$ \;
	  $\far[s] \gets \far[s] + \dis[w]$\;
        }
      }
    }
    $\cc[s] = \frac{1}{\far[s]}$
  }
  \Return{$\cc[.]$}\;
 \end{algorithm}
\renewcommand{\baselinestretch}{1}

\section{Problem Definitions}\label{sec:manage}
The following problem can be considered as a generalized version of
the problems investigated in~\cite{Everett2005,Ishakian2012}.

\begin{definition} \emph{(Centrality-based network management)}
  Let $G = (V,E)$ be a graph. Given a centrality metric $\cal C$, a
  target centrality vector $\cv'$, and an upper bound $U$ on the
  number of inserted/deleted edges, construct a graph $G' = (V, E')$,
  s.t., $|E \Delta E'| \leq U$ and $||\cv' - \cv_{G'}||$ is minimized.
\label{def:gencen}
\end{definition}

In this work, we are interested in the closeness metric which is based
on shortest paths. Hence, implicitly, we are also interested in the
following problem partly investigated
in~\cite{Israeli02,Papagelis2011,Phillips1993}.

\begin{definition} \emph{(Shortest-path-based network management)}
  Let $G = (V,E)$ be a graph. Given an upper bound $U$ on the number
  of inserted/deleted edges, construct a graph $G' = (V, E')$ where
  $|E \Delta E'| \leq U$ and the (average) shortest-path in $G'$ is
  minimized/maximized.
\label{def:genshor}
\end{definition}

These problems and their variants have several applications such as
slowing down pathogen outbreaks, increasing the efficiency of the
advertisements, and analyzing the robustness of a network. Consider an
airline company with flights to thousands of airports and aim to add
some new routes to increase the load of some underutilized
airports. When a new route is inserted, in order to evaluate its
overall impact, all the airport centralities need to be
re-computed which is a quite expensive task. Hence, we need to have
efficient incremental algorithms to tackle this problem. Such
algorithms can be used as a fundamental building block to centrality-
and shortest-path-based network management problems (and their variants)
as well as temporal centrality/shortest-path analyses and dynamic
network analyses. In this work, we investigate this subproblem.

\begin{definition} \emph{(Incremental closeness centrality)}
  Given a graph $G = (V,E)$, its centrality vector $\cc$, and an edge
  $uv$, find the centrality vector $\cc'$ of the graph $G' = (V,E
  \cup \{uv\})$~(or $G' = (V,E \setminus \{uv\})$).
\end {definition}

%Centrality management problem is changing the structure of the graph to adjust
%the centrality values of nodes in the network. Previously, it is partly discussed
%in~\cite{Everett2005, Ishakian2012}. It has interesting applications in traffic
%controlling, contingency analysis of power grids and offline analysis of possible attack
%strategies. 

%In order to solve this problem, edges should be selected in an optimum
%way and it is hard to find which edges to insert/delete from the graph
%so that the graph structure is changed minimally. Selecting some
%candidate edges, inserting/deleting them and looking at the impacts is
%a natural way to approach this problem. However, seeing the impact of
%a single edge change is also an expensive operation which requires to
%compute centrality values from scratch.  In this work, we propose
%algorithms to update one of the widely used centrality metrics,
%closeness centrality, of the vertices in case of a single edge
%insertion and deletion. These algorithms will serve as a fundamental
%building block to solve Generic Centrality Management Problem.

\section{Maintaining Centrality}\label{sec:main}

Many interesting real-life networks are scale free. The diameters of
these networks grow proportional to the logarithm of the number of
nodes. That is, even with hundreds of millions of vertices, the
diameter is small, and when the graph is modified with minor updates,
it tends to stay small. Combining this with their power-law degree
distribution, we obtain the spike-shaped shortest-distance
distribution as shown in Figure~\ref{fig:levels}. We use two main
approaches: {\em work filtering} and {\em SSSP hybridization} to
exploit these observations and reduce the centrality computation
time.

\subsection{Work Filtering}

For efficient maintenance of closeness centrality in case of an edge
insertion/deletion, we propose a {\em work filter} which reduces the
number of SSSPs in Algorithm~\ref{alg:cc} and the cost of each
SSSP. Work filtering uses three techniques: filtering with {\em level
differences}, with {\em biconnected component decomposition}, and with
{\em identical vertices}.

\subsubsection{Filtering with level differences}

The motivation of level-based filtering is detecting the unnecessary
updates and filtering them.  Let $G = (V, E)$ be the current graph and
$uv$ be an edge to be inserted to $G$. Let $G' = (V, E \union {uv})$
be the updated graph. The centrality definition in~\eqref{eq:first}
implies that for a vertex $s \in V$, if $\dis_G(s,t) = \dis_{G'}(s,t)$
for all $t \in V$ then $\cc[s] = \cc'[s]$. The following theorem is
used to detect such vertices and filter their SSSPs.

\begin{theorem}
Let $G = (V,E)$ be a graph and $u$ and $v$ be two vertices in $V$
s.t. $uv \notin E$. Let $G' = (V,E \union {uv})$. Then $\cc[s] =
\cc'[s]$ if and only if $|\dis_G(s,u) - \dis_G(s,v)|
\leq 1$.\label{thm:add}
\end{theorem}

\begin{proof}
If $s$ is disconnected from $u$ and $v$, $uv$'s insertion will not
change the closeness centrality of $s$. Hence, $\cc[s] = \cc'[s]$. If $s$
is only connected to one of $u$ and $v$ in $G$ the difference $|\dis_G(s,u) -
\dis_G(s,v)|$ is $\infty$, and the closeness centrality score of $s$ needs
to be updated by using the new, larger connected component containing
$s$.

When $s$ is connected to both $u$ and $v$ in $G$, we investigate the edge
insertion in three cases as shown in Figure~\ref{fig:leveldiffs}:

Case 1. $\dis_G(s,u) = \dis_G(s,v)$: Assume that the path $s
  \stackrel{P}{\leadsto} u$--$v \stackrel{P'}{\leadsto} t$ is a
  shortest $s \leadsto t$ path in $G'$ containing $uv$. Since
  $\dis_G(s,u) = \dis_G(s,v)$ there exist another path $s
  \stackrel{P''}{\leadsto} v \stackrel{P'}{\leadsto} t$ in $G'$ with
  one less edge. Hence, $uv$ cannot be in a shortest path: 
  $\forall t \in V, \dis_G(s,t) = \dis_{G'}(s,t)$.

Case 2. $|\dis_G(s,u) - \dis_G(s,v)| = 1$: Let $\dis_G(s,u) <
  \dis_G(s,v)$ and assume that $s \stackrel{P}{\leadsto} u$--$v
  \stackrel{P'}{\leadsto} t$ is a shortest path in $G'$ containing
  $uv$. Since $\dis_G(s,v) = \dis_G(s,u) + 1$, there exist another
  path $s \stackrel{P''}{\leadsto} v \stackrel{P'}{\leadsto} t$ in
  $G'$ with the same number of edges. Hence, $\forall t \in V, \dis_G(s,t) =
  \dis_{G'}(s,t)$.

Case 3. $|\dis_G(s,u) - \dis_G(s,v)| > 1$: Let $\dis_G(s,u) <
  \dis_G(s,v)$. The path $s \leadsto u$--$v$ in $G'$ is shorter than
  the shortest $s \leadsto v$ path in $G$ since $\dis_G(s,v) > \dis_G(s,u)
  + 1$. Hence, an update on $\cc[s]$ is necessary.
\end{proof}

\begin{figure}[htbp]
\vspace*{-1ex}
\center
\includegraphics[width=0.45\textwidth]{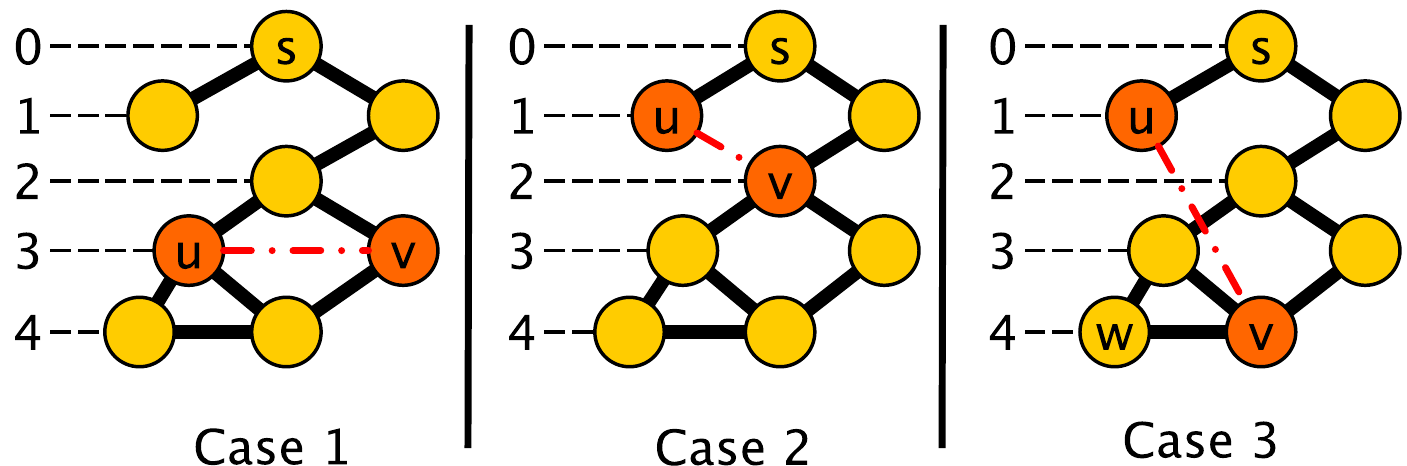}
\vspace*{-1ex}
\caption{\small (1) Three cases of edge insertion: when an edge $uv$ is inserted to the
 graph $G$, for each vertex $s$, one of them is true:
 (a) $\dis_G(s,u) = \dis_G(s,v)$, (b) $|\dis_G(s,u) - \dis_G(s,v)| =
 1$, and (c) $|\dis_G(s,u) - \dis_G(s,v)| > 1$.}
\label{fig:leveldiffs}
\vspace*{-1ex}
\end{figure}

Although Theorem~\ref{thm:add} yields to a filter only in case 
of edge insertions, the following corollary which is used for edge 
deletion easily follows.
\begin{corollary}
Let $G = (V, E)$ be a graph and $u$ and $v$ be two vertices in $V$
s.t. $uv \in E$. Let $G' = (V,E \setminus \{uv\})$. Then $\cc[s] =
\cc'[s]$ if and only if $|\dis_{G'}(s,u) - \dis_{G'}(s,v)|
\leq 1$.
\end{corollary}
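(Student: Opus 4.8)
The plan is to obtain the corollary as an immediate consequence of Theorem~\ref{thm:add}, by exchanging the roles of the two graphs: deleting $uv$ from $G$ is precisely the inverse of inserting $uv$ into $G'$. Formally, I would set $H = G' = (V, E \setminus \{uv\})$ and first verify that $H$ satisfies the hypotheses of the theorem, namely $uv \notin E(H)$, which holds by construction. Since $H \union \{uv\} = (V,E) = G$, the graph $G$ is exactly the augmented graph that Theorem~\ref{thm:add} produces from $H$.

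Applying Theorem~\ref{thm:add} with $H$ in the role of the original graph and $G = H \union \{uv\}$ in the role of the augmented graph, the theorem states that the centrality of a vertex $s$ is preserved, i.e. $\cc_{H}[s] = \cc_{G}[s]$, if and only if $|\dis_{H}(s,u) - \dis_{H}(s,v)| \le 1$. Substituting $H = G'$ turns this into: $\cc_{G'}[s] = \cc_{G}[s]$ iff $|\dis_{G'}(s,u) - \dis_{G'}(s,v)| \le 1$, which is exactly the claim of the corollary, since equality of centralities is symmetric and so $\cc_{G'}[s] = \cc_G[s]$ and $\cc[s] = \cc'[s]$ assert the same thing.

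The one point that deserves care --- and the only place the argument could go wrong if applied carelessly --- is that the distance condition must be expressed using the \emph{sparser} graph, the one \emph{not} containing $uv$. For insertion that sparser graph is $G$ and the filter uses $\dis_G$; for deletion the sparser graph is $G'$, so the filter correctly uses $\dis_{G'}$ as stated. I would therefore stress in the write-up that the distances are measured in $G'$ rather than in $G$, which is also the natural graph in which the deletion procedure holds its shortest-path information before the update is applied. Beyond this bookkeeping, no new case analysis is required: the three-case argument of Theorem~\ref{thm:add} is reused verbatim through the role swap, so the proof reduces to a single invocation of the theorem plus the observation that $uv \notin E(G')$.
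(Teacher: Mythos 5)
Your proposal is correct and matches the paper's intent exactly: the paper derives this corollary from Theorem~\ref{thm:add} by precisely the role-swap you describe (deletion of $uv$ from $G$ is insertion of $uv$ into $G'$), stating that it ``easily follows'' without further case analysis. Your added emphasis that the distance condition must be measured in the sparser graph $G'$ is the right point of care and is consistent with how the paper states the corollary.
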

With this corollary, the work filter can be implemented for both 
edge insertions and deletions. 
The pseudocode of the update algorithm in case of
an edge insertion is given in Algorithm~\ref{alg:filtered}.
When an edge $uv$ is inserted/deleted, to employ the filter, we first compute
the distances from $u$ and $v$ to all other vertices. And, it filters
the vertices satisfying the statement of Theorem~\ref{thm:add}.

\renewcommand{\baselinestretch}{0.70}
\begin{algorithm}[htbp]
\DontPrintSemicolon
\SetKwComment{tcp}{$\triangleright$}{} \small
\caption{Simple work filtering}
\label{alg:filtered}
\KwData{${G = (V,E)}$, $\cc[.]$, $uv$} 
\KwOut{$\cc'[.]$} 
  $G' \gets (V, E \union \{uv\})$\;
  $\dis u[.] \gets $ SSSP($G$, $u$) \tcp{\ distances from $u$ in $G$}
  $\dis v[.] \gets $ SSSP($G$, $v$) \tcp{\ distances from $v$ in $G$}
  
  \For{{\bf each} $s \in V$} {
	\If{$|\dis u[s] - \dis v[s]| \leq 1$} {
		$\cc'[s] = \cc[s]$\;
     } \Else {
     	\tcp{\ use the computation in Algorithm~\ref{alg:cc} with $G'$}
     }
  }
  \Return{$\cc'[.]$}\;
 \end{algorithm}
\renewcommand{\baselinestretch}{1}

In theory, filtering by levels can reduce the update time
significantly. However, in practice, its effectiveness depends on the
underlying structure of $G$. Many real-life networks have been
repeatedly shown to possess unique characteristics such as a small
diameter and a power-law degree distribution~\cite{McGlohon11}.  And
the spread of information is extremely
fast~\cite{Doerr11,Doerr12}. The proposed filter exploits one of these
characteristics for efficient closeness centrality updates: the
distribution of shortest-path lengths. Its efficiency is based on the
phenomenon shown in Figure~\ref{fig:levels} for a set of graphs used
in our experiments: the probability distribution function for a
shortest-path length being equal to $x$ is unimodular and spike-shaped
for many social networks and also some others. This is the outcome of
the short diameter and power-law degree distribution. On the other
hand, for some spatial networks such as road networks, there are no
sharp peaks and the shortest-path distances are distributed in a more
uniform way. The work filter we propose here prefer the former.

\begin{figure}[htbp]
\vspace*{-1ex}
\center
\includegraphics[width=0.9\columnwidth]{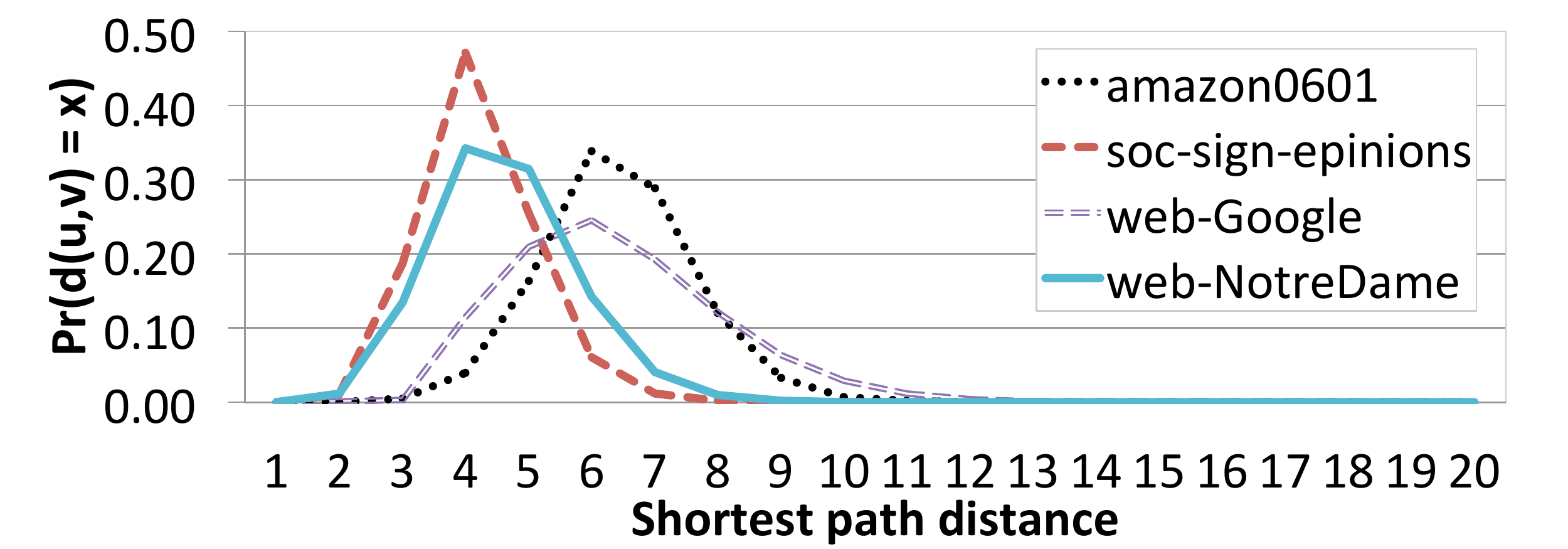}
\vspace*{-1ex}
\caption{\small Probability of the distance between two~(connected) vertices
is equal to $x$ for four social and web networks.}
\label{fig:levels}
\vspace*{-1ex}
\end{figure}

\subsubsection{Filtering with biconnected components}

Our work filter can be enhanced by employing and maintaining a
biconnected component decomposition~(BCD) of $G = (V,E)$.  A BCD is a
partitioning $\Pi$ of the edge set $E$ where $\Pi(e)$ indicates the
component of each edge $e \in E$. A toy graph and its BCDs before and
after edge insertions are given in Figure~\ref{fig:bcd}.

\begin{figure}[htbp]
\vspace*{-1ex}
\center
\hspace*{-1ex}
\subfigure[$G$]{\includegraphics[height=1.7cm]
{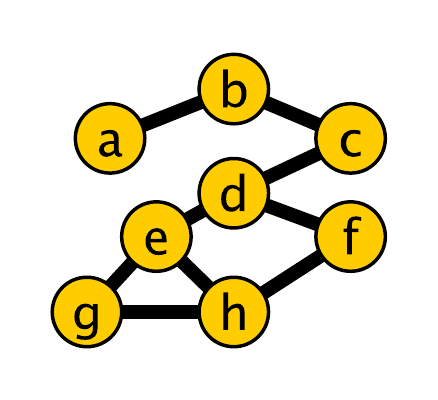}\label{fig:graph}}
\hspace*{1ex}
\subfigure[$\Pi$]{\includegraphics[height=1.85cm]
{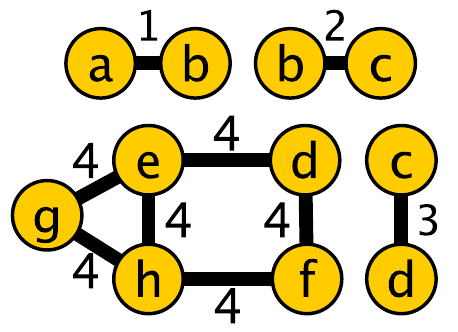}\label{fig:pig}}
\hspace*{1ex}
\subfigure[$\Pi'$]{\includegraphics[height=1.9cm]
{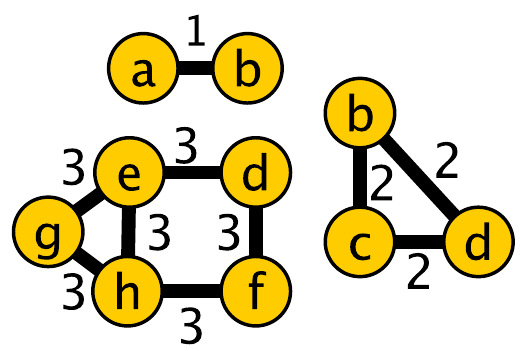}\label{fig:pig2}}
\vspace*{1ex}
\caption{\small A graph $G$~(left), its biconnected component decomposition
$\Pi$ into $4$ components~(middle), and the updated $\Pi'$ with $3$
components when the edge $bd$ is inserted~(right). The sets of
articulation vertices before and after the edge insertion are
$\{b,c,d\}$ and $\{b,d\}$, respectively. After the edge addition, $cid
= 2$. That is to say, the second component contains the new
edge. Hence, the biconnected component $2$ is extracted first and
executes an update algorithm only for the vertices $\{b,c,d\}$. It
also initiates a fixing phase to update the closeness centrality
values for the rest of the vertices. After the edge insertion,
$\rep[a] = b$, and $\rep[e] = \rep[f] = \rep[g] = \rep[h] = b$. Hence,
$\card[b] = 2$, $\card[c] = 1$, and $\card[d] = 5$. And, $\sumcard[b]
= 1$, $\sumcard[c] = 0$, and $\sumcard[d] = 6$.}
\label{fig:bcd}
\vspace*{-1ex}
\end{figure}

When $uv$ is inserted to $G = (V,E)$ and $G' = (V,E' \union \{uv\})$
is obtained, we check if
$$\{\Pi(uw): w \in \Gamma_G(u)\} \cap \{\Pi(vw): w \in \Gamma_G(v)\}$$
is empty or not. If the intersection is not empty, there will be only
one element in it, $cid$, which is the id of the biconnected component
of $G'$ containing $uv$~(otherwise $\Pi$ is not a valid BCD). In this case, $\Pi'(e)$ is set to $\Pi(e)$ for all $e \in
E$ and $\Pi'(uv)$ is set to $cid$. If there is no biconnected
component containing both $u$ and $v$~(see Figure~\ref{fig:pig2}), i.e.,
if the intersection above is empty, we construct $\Pi'$ from
scratch and set $cid = \Pi'(uv)$. $\Pi$ can be computed in linear, $\mathcal{O} (m +
n)$ time~\cite{Hopcroft10}. Hence, the
cost of BCD maintenance is negligible compared to the cost of
updating closeness centrality.
  
Let $G'_{cid} = (V_{cid}, E'_{cid}) $ be the biconnected component of $G'$ containing $uv$ where 
\begin{align*}
V_{cid} &= \{v \in V: cid \in \{\Pi'(vw): vw \in E'\}\}, \\
E'_{cid} &= \{e \in E': \Pi'(e) = cid\}.
\end{align*}
Let $\Am_{cid} \subseteq V_{cid}$ be the set of articulation vertices in
$G'_{cid}$. Given $\Pi'$, it is easy to detect the articulation
vertices since  $u$ is an articulation vertex if and only if it is
part of at least
two components in the BCD: $|\{\Pi'(uw): uw \in E'\}| > 1$.

We will execute SSSPs only for the vertices in $G'_{cid}$ and use
the new values to fix the centralities for the rest of the graph. The
contributions of the vertices in $V \setminus V_{cid}$ are integrated
to the SSSPs by using a representative function
$\rep: V \rightarrow V_{cid} \union \{\tt{null}\}$ which maps each
vertex $v \in V$ either to a representative in $G'_{cid}$ or to
$\tt{null}$ (if $v$ and the vertices in $V_{cid}$
are in different connected components of $G'$).

For each vertex $u \in V_{cid}$, we set $\rep[u] = u$.  For the other
vertices, let $\overline{G'_{cid}} = \{V, E' \setminus E'_{cid}\}$. If
a vertex $v \in V \setminus V_{cid}$ and an articulation vertex $u \in
\Am_{cid}$ are connected in $\overline{G'_{cid}}$, i.e.,
$\dis_{\overline{G'_{cid}}}(u,v) \neq \infty$, we say that $v$ is
represented by $u$ in $G'_{cid}$ and set $\rep[v] = u$. Otherwise,
$\rep[v]$ is set to ${\tt null}$. The following theorem states that
$\rep$ is well defined: each vertex is represented by at most one vertex.

\begin{theorem}\label{thm:rep}
 For each $v$ in $V \setminus V_{cid}$, there is at most
 one articulation vertex $u \in \Am_{cid}$ such that
 $\dis_{\overline{G'_{cid}}}(u,v) \neq \infty$.
\end{theorem}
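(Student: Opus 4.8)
The plan is to argue by contradiction, reducing the statement to the classical fact that two distinct edges lying on a common simple cycle must belong to the same biconnected component. First I would record a structural observation about $\overline{G'_{cid}}$: a non-articulation vertex of $V_{cid}$ belongs to the single block $cid$, so all of its incident edges lie in $E'_{cid}$ and it becomes isolated once $E'_{cid}$ is removed. Consequently, in $\overline{G'_{cid}}$ the only vertices of $V_{cid}$ that can have a neighbor are the articulation vertices in $\Am_{cid}$; in particular, if $v \in V \setminus V_{cid}$ is connected to some $u \in V_{cid}$ in $\overline{G'_{cid}}$, then necessarily $u \in \Am_{cid}$.

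Now suppose, for contradiction, that some $v \in V \setminus V_{cid}$ is connected in $\overline{G'_{cid}}$ to two distinct articulation vertices $u_1, u_2 \in \Am_{cid}$. Then $u_1$ and $u_2$ lie in the same connected component of $\overline{G'_{cid}}$. Among all pairs of distinct vertices of $\Am_{cid}$ lying in a common component of $\overline{G'_{cid}}$, I would select a pair $a, b$ joined by a simple path $P$ in $\overline{G'_{cid}}$ of minimum length. By minimality no internal vertex of $P$ belongs to $\Am_{cid}$; combined with the observation above (isolated non-articulation vertices of $V_{cid}$ cannot be internal on a path), every internal vertex of $P$ lies in $V \setminus V_{cid}$. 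Since $G'_{cid}$ is biconnected and $a, b \in V_{cid}$, there is also a path $Q$ from $a$ to $b$ using only edges of $E'_{cid}$, all of whose internal vertices lie in $V_{cid}$.

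The decisive step is that $P$ and $Q$ meet only at $a$ and $b$, because the internal vertices of $P$ avoid $V_{cid}$ while those of $Q$ stay inside it. Hence $C = P \cup Q$ is a \emph{simple} cycle. As $a \neq b$, the cycle $C$ contains at least one edge $e_P$ of $P$, which lies in $E' \setminus E'_{cid}$ by the definition of $\overline{G'_{cid}}$, and at least one edge $e_Q$ of $Q$, which lies in $E'_{cid}$. Two distinct edges on a common simple cycle belong to the same biconnected component, so $\Pi'(e_P) = \Pi'(e_Q) = cid$, contradicting $e_P \notin E'_{cid}$. This contradiction proves that $v$ can be connected to at most one vertex of $\Am_{cid}$ in $\overline{G'_{cid}}$.

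I expect the main obstacle to be ensuring that the external path $P$ and the internal path $Q$ are genuinely internally vertex-disjoint, so that their union is an honest simple cycle rather than a more complicated closed walk. The minimal-length choice of $P$, together with the isolation of non-articulation vertices of $V_{cid}$ in $\overline{G'_{cid}}$, is precisely what forces the internal vertices of $P$ out of $V_{cid}$ and thus off $Q$; once that is secured, the standard cycle characterization of blocks closes the argument immediately.
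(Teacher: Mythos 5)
Your proof is correct and complete. Note that the paper does not actually give an argument for this theorem---its ``proof'' is a single sentence asserting that the claim follows directly from the definition of a BCD---so your proposal supplies precisely the details that were omitted, and there is no written argument to compare it against. All three of your ingredients are sound: a vertex of $V_{cid}\setminus \Am_{cid}$ has all of its incident edges labeled $cid$ (by the paper's own characterization of articulation vertices as the vertices lying in at least two blocks), hence is isolated in $\overline{G'_{cid}}$; the minimum-length choice of the pair $a,b$ forces every internal vertex of $P$ out of $\Am_{cid}$ and therefore out of $V_{cid}$, so $P$ and $Q$ are internally vertex-disjoint; and $P$ and $Q$ cannot both degenerate to the single edge $ab$, since in a simple graph that one edge cannot lie both inside and outside $E'_{cid}$, so $P\cup Q$ is a genuine simple cycle of length at least three. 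The cycle characterization of blocks then gives $\Pi'(e_P)=\Pi'(e_Q)=cid$, contradicting $e_P\notin E'_{cid}$. In effect you prove the stronger and cleaner structural fact that distinct vertices of $\Am_{cid}$ lie in distinct connected components of $\overline{G'_{cid}}$ (visible in block-cut-tree terms: the subtrees hanging off the block node $cid$ at its different articulation vertices are pairwise disjoint), which is exactly what the paper implicitly appeals to when invoking ``the definition of BCD,'' and which immediately yields that each $v\in V\setminus V_{cid}$ can reach at most one vertex of $\Am_{cid}$ in $\overline{G'_{cid}}$.
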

\begin{proof}
The proof directly follows from the definition of BCD and is omitted.
%% Suppose there are two vertices $u_1, u_2 \in \Am_{cid}$, which are connected
%% with $v \in V \setminus V_{cid}$ in by using the edges $E' \setminus
%% E'_{cid}$. Let $P_1 = u_1 \leadsto v$, and $P_2 = u_2 \leadsto v$ be
%% two paths in $\overline{G'_{cid}}$. Let $w \in V\setminus V_{cid}$ be
%% the first common vertex on $P_1$ and $P_2$ while going from $u_1$ and
%% $u_2$ to $v$. Then, there is a cycle $w \leadsto u_2 \leadsto u_1
%% \leadsto w$ in $G'$ where only the middle path is from
%% $E'_{cid}$. However, since these vertices are strongly connected, all
%% of these edges should have been in $E'_{cid}$. That is, $w$ should
%% have been a member of $V_{cid}$ which is a contradiction to its
%% definition. Hence, for each vertex $v \in V \setminus V_{cid}$, there
%% can be at most one articulation vertex in $u \in \Am_{cid}$ such that
%% $\dis_{\overline{G'_{cid}}}(u,v) \neq \infty$.
\end{proof}

Since all the~(shortest) paths from a vertex $v \in V\setminus V_{cid}$ to a
vertex in $V_{cid}$ are passing through $\rep[v]$, the following is a
corollary of the theorem.
\begin{corollary}\label{cor:path}
For each vertex $v \in V \setminus V_{cid}$ with $\rep[v] \neq {\tt
null}$,
$\dis_{\overline{G'_{cid}}}(v, \rep[v]) = \dis_{G'}(v, \rep[v]),$
which is different than $\infty$.  Furthermore, for a vertex $w \in V$
which is also represented in $G'_{cid}$ but not in the connected
component of ${\overline{G'_{cid}}}$ containing $v$, $\dis_{G'}(v,w)$
is equal to 
%\begin{align*}
$$\dis_{G'}(v,\rep[v]) + \dis_{G'}(\rep[v],\rep[w]) + \dis_{G'}(\rep[w],w).$$
%\end{align*}
If $w \in V_{cid}$ the last term on the right is $0$, since $\rep[w] =
w$.
\end{corollary}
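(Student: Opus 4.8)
The plan is to base both claims on the single structural fact recorded just before the statement: since $\rep[v]$ is the unique vertex of $\Am_{cid}$ reachable from $v$ in $\overline{G'_{cid}}$, every path in $G'$ from a vertex $v \in V \setminus V_{cid}$ to any vertex of $V_{cid}$ passes through $\rep[v]$. I would first isolate the consequence I use repeatedly: the \emph{first} vertex of $V_{cid}$ met along any shortest path starting at such a $v$ is exactly $\rep[v]$. Indeed, the initial segment up to that first vertex is itself a shortest path ending in $V_{cid}$, so it passes through $\rep[v] \in V_{cid}$, and being ``first'' forces the two to coincide.

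For the first claim I would take a shortest $v \leadsto \rep[v]$ path $P$ in $G'$ and show it avoids $E'_{cid}$ entirely. Since $\rep[v]$ is both the terminus of $P$ and, by the observation above, the first vertex of $V_{cid}$ on $P$, it is the \emph{only} vertex of $V_{cid}$ on $P$; as every edge of $E'_{cid}$ has both endpoints in $V_{cid}$, such an edge cannot occur on $P$. Hence $P$ lives in $\overline{G'_{cid}}$, giving $\dis_{\overline{G'_{cid}}}(v,\rep[v]) \le \dis_{G'}(v,\rep[v])$, while the reverse inequality is automatic because $\overline{G'_{cid}}$ is a subgraph of $G'$. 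Finiteness is exactly the conclusion of Theorem~\ref{thm:rep} that makes $\rep[v]$ well defined.

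For the decomposition I would fix a shortest $v \leadsto w$ path $P$ in $G'$. Because $v$ and $w$ lie in different components of $\overline{G'_{cid}}$, $P$ must use an edge of $E'_{cid}$ and therefore meets $V_{cid}$; let $u_v$ and $u_w$ be its first and last vertices in $V_{cid}$. The first-entry observation gives $u_v = \rep[v]$, and applying it from the $w$ end (or using $u_w = w$ when $w \in V_{cid}$) gives $u_w = \rep[w]$. Since $\rep[v]$ lies in the component of $v$ and $\rep[w]$ in that of $w$, the hypothesis that these components differ forces $\rep[v] \neq \rep[w]$, so the two meeting points are distinct and occur in the order $\rep[v]$ then $\rep[w]$ along $P$. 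Cutting $P$ at $\rep[v]$ and $\rep[w]$ and using that every subpath of a shortest path is shortest identifies the three pieces with $\dis_{G'}(v,\rep[v])$, $\dis_{G'}(\rep[v],\rep[w])$ and $\dis_{G'}(\rep[w],w)$, whose sum is $\dis_{G'}(v,w)$; the case $w \in V_{cid}$ collapses the last piece to $0$.

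The only delicate point I anticipate is the first-entry observation itself — pinning down that a shortest path reaches $\rep[v]$ before touching any other vertex of $V_{cid}$, and never oscillates across the block boundary. Everything else is routine: it is just the articulation property combined with the simplicity and prefix-optimality of shortest paths, after which the additivity of the three segments is a standard concatenation argument.
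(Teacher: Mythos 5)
Your proposal is correct and follows essentially the same route as the paper, which offers no explicit proof at all: it simply notes that all paths from $v \in V \setminus V_{cid}$ to $V_{cid}$ pass through $\rep[v]$ (a consequence of Theorem~\ref{thm:rep} and the biconnected decomposition) and declares the corollary immediate. Your ``first-entry'' observation, the subgraph inequality for the first claim, and the cut-and-concatenate argument for the three-term decomposition are exactly the details the paper leaves implicit, and they are all sound.
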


To correctly update the new centrality values, we compute two
extra values for each vertex $u \in V_{cid}$,
\begin{align}
\card[u] &= \left|\{v \in V: \rep[v] = u\}\right|,\label{eq:card}\\
\sumcard[u] &= \sum_{\stackrel{v \in V}{\rep[v] = u}}\dis_{G'}(u,v).\label{eq:sumcard}
\end{align}
That is, $\card[u]$ is the number of vertices in $V$ which are
represented by $u$~(including $u$). And $\sumcard[u]$ is the farness of
$u$ to these vertices in $G'$. The modified update
algorithm is given in Algorithm~\ref{alg:combined}.

\renewcommand{\baselinestretch}{0.70}
\begin{algorithm}[h]
\DontPrintSemicolon
\SetKwComment{tcp}{$\triangleright$}{} \small
\caption{Update with BCD and level differences}
\label{alg:combined}
\KwData{${G = (V,E)}$, $\Pi$, $\cc$, $\far$, $uv$} 
\KwOut{$\cc'[.]$, $\far'[.]$}
  \lnl{line:filter}\tcp{\ prepare for filtering} 
  $G' \gets (V, E')$ where $E' \gets E \union \{uv\}$\;
  $cSet_u \gets \{\Pi(uw): w \in \Gamma_G(u)\}$\;
  $cSet_v \gets \{\Pi(vw): w \in \Gamma_G(v)\}$\;

  \If{$cSet_u \cap cSet_v \neq \emptyset$} {
    $cid \gets $ \#the common component\;
    $\Pi'(e) \gets \Pi(e)$ $\forall e \in E$, $\Pi'(uv) \gets cid$\;
  } \lElse {
    construct $\Pi'$ from $G'$, $cid \gets \Pi'(uv)$\;
  }
  
  $V_{cid} \gets \{v \in V: cid \in \{\Pi'(vw): vw \in E'\}\}$ \;
  $E'_{cid} \gets \{e \in E': \Pi'(e) = cid\}$ \;
  $G'_{cid} = (V_{cid}, E'_{cid})$ \;
  $G_{cid} = (V_{cid}, E'_{cid} \setminus \{uv\})$\;

  Set $\rep[v]$, $\forall v \in V$\;
  $\card[u] \gets \left|\{v \in V, \rep[v] = u\}\right|$, $\forall u \in V_{cid}$\;
  $\sumcard[u] \gets \sum_{v \in V, \rep[v] = u}\dis_{G'}(u,v)$, $\forall u \in V_{cid}$\;

  $\dis u[.] \gets $ SSSP($G_{cid}$, $u$), $\dis v[.] \gets $ SSSP($G_{cid}$, $v$)\;
 
  \lnl{line:update}\tcp{\ update phase}
  \For{{\bf each} $s \in V_{cid}$} {
	\lIf{$|\dis u[s] - \dis v[s]| \leq 1$} {
	  $\cc'[s] = \cc[s]$\;
     } \Else {
	\lnl{line:source}$Q \leftarrow$ empty queue \;    
	  $\dis[v] \gets \infty$, $\forall v \in V_{cid} \setminus{\{s\}}$ \;
	  $Q$.push($s$),\ $\dis[s] \gets 0$ \; 
	  $\far'[s] \gets 0$ \;
	  \While{$Q$ is not empty} {
	    $v \leftarrow Q$.pop()\;
	    \For{{\bf all} $w \in \Gamma_{G'_{cid}}(v)$}{
              \If{$\dis[w] = \infty$}{
		$Q$.push($w$) \;
		$\dis[w] \gets \dis[v] + 1$ \;
		\lnl{line:far}$\far'[s] \gets \far'[s] + (\dis[w] \times \card[w]) + \sumcard[w]$\;
              }
	    }
	  }
	  $\cc'[s] = \frac{1}{\far'[s]}$	  
	}
  }

  \lnl{line:fix}\tcp{\ fix phase}
  \For{{\bf each} $v \in V \setminus V_{cid}$} {
    $r \gets \rep[v]$\;
    \If{$r \neq {\tt{null}}$ {\bf and} $\far[r] \neq \far'[r]$} {      
      \lnl{line:farup}$\far'[v] \gets \far[v] - (\far[r] - \far'[r])$\;
      $\cc'[v] \gets \frac{1}{\far'[v]}$\;     
    }
  }

  \Return{$\cc'[.]$}\;
 \end{algorithm}
\renewcommand{\baselinestretch}{1}

\begin{lemma}\label{lem:bcd1}
For each vertex $v \in V_{cid}$, Algorithm~\ref{alg:combined} computes
the correct $\cc'[v]$ value. 
\end{lemma}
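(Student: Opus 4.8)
The plan is to split the update phase into its two branches---the filtered branch that reuses $\cc'[s]=\cc[s]$ and the branch that rebuilds $\far'[s]$ from a BFS confined to $G'_{cid}$---and to show each returns the exact farness of $s$ in $G'$.

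For the filtered branch, the point to justify is that the level test is evaluated with distances inside $G_{cid}$, whereas Theorem~\ref{thm:add} is stated with distances in $G$. The structural fact I would use is that a shortest path between two vertices of a common biconnected block never leaves that block: any excursion out of $V_{cid}$ must exit and re-enter through articulation vertices, and the tree structure of the block-cut decomposition forces the re-entry to occur at the same articulation vertex, contradicting optimality. Hence $\dis_{G_{cid}}(s,u)=\dis_G(s,u)$ and $\dis_{G_{cid}}(s,v)=\dis_G(s,v)$ for every $s\in V_{cid}$, so $|\dis u[s]-\dis v[s]|\le 1$ is precisely the hypothesis of Theorem~\ref{thm:add}, which then yields $\cc[s]=\cc'[s]$ and makes copying the old value correct.

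For the recompute branch I would first apply the same block argument, now in $G'$, to conclude the BFS reports true global distances, $\dis[w]=\dis_{G'_{cid}}(s,w)=\dis_{G'}(s,w)$ for all $w\in V_{cid}$. Then I would decompose $\far'[s]$ along the representative partition $\{R_w : w\in V_{cid}\}$ with $R_w=\{t: \rep[t]=w\}$. Vertices with $\rep={\tt null}$ are disconnected from $s$ in $G'$---a path from such a vertex to $s$ would enter $V_{cid}$ through an articulation vertex, forcing $\rep\neq{\tt null}$---so they contribute $\infty$ and drop out of the farness sum. For $t\in R_w$, Corollary~\ref{cor:path} gives $\dis_{G'}(s,t)=\dis_{G'}(s,w)+\dis_{G'}(w,t)$, and summing over each group with the definitions \eqref{eq:card} and \eqref{eq:sumcard} gives
\begin{equation*}
\far'[s]=\sum_{w\in V_{cid}}\bigl(\card[w]\cdot\dis_{G'}(s,w)+\sumcard[w]\bigr),
\end{equation*}
which is exactly what line~\ref{line:far} accumulates over the BFS.

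The main obstacle is the distance-preservation claim underlying both branches: that confining the search to the single block $G_{cid}$ / $G'_{cid}$ neither shortens nor lengthens any relevant shortest path. Proving it cleanly needs the block-cut-tree argument above, and this is also where the $G$-versus-$G'$ distinction bites, since $V_{cid}$ is biconnected only after the insertion and the filtered branch must be argued against the pre-insertion blocks that merge into $V_{cid}$. Once that is settled the partition-and-sum step is routine from Corollary~\ref{cor:path}; the one bookkeeping point to verify is that the source's own group $R_s$, contributing $\sumcard[s]$ (the $w=s$ term, which has $\dis[s]=0$), is included in $\far'[s]$.
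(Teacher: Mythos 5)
Your proposal is correct, and its core---the decomposition of $\far'[s]$ along the representative partition, using the definitions \eqref{eq:card} and \eqref{eq:sumcard} together with Corollary~\ref{cor:path} and the uniqueness guarantee of Theorem~\ref{thm:rep} to show that line~\ref{line:far} accumulates exactly $\sum_{t:\,\dis_{G'}(s,t)\neq\infty}\dis_{G'}(s,t)$---is precisely the paper's argument. Where you differ is in scope rather than in route: the paper's proof addresses only this aggregation step, silently identifying the BFS distances computed inside $G'_{cid}$ with the global distances $\dis_{G'}(s,w)$, and it says nothing about the filtered branch, where $\cc'[s]=\cc[s]$ is copied whenever $|\dis u[s]-\dis v[s]|\le 1$ even though those distances are measured in $G_{cid}$ rather than in $G$ as the hypothesis of Theorem~\ref{thm:add} requires. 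You make both of these points explicit and discharge them with the block-cut-tree argument (an excursion out of the block must exit and re-enter through the same articulation vertex, so shortest paths between vertices of a common block stay inside it), including the genuine subtlety that for the filtered branch the argument must be run in $G$, where $V_{cid}$ is not a single block but a union of pre-insertion blocks; that union forms a connected (path-shaped) subtree of the block-cut tree of $G$, so the same exit/re-enter property applies and the distances are still preserved. Your proof is thus a strict superset of the paper's: the same key corollary and the same summation, plus justification of the two distance-preservation facts and of the filtered branch, both of which the paper leaves implicit. The cost is the extra structural argument; the benefit is that the lemma is actually proved for both branches of the update phase rather than for one.
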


\begin{proof}
We will prove that $\far'[v]$ is correct for all $v \in V_{cid}$. Let
$v = s$ be the vertex whose closeness centrality update is started at
line~\ref{line:source}. At line~\ref{line:far} of
Algorithm~\ref{alg:combined}, the update on $\far'[v]$ is
$\dis_{G'}(v,w) \times \card[w] + \sumcard[w]$
which can be rewritten as 
$$\sum_{\stackrel{u \in V}{\rep[u] =  w}}\dis_{G'}(v,w) + \dis_{G'}(w,u),$$
by using~\eqref{eq:card} and~\eqref{eq:sumcard}. According to
Corollary~\ref{cor:path}, this is equal to 
$$\sum_{\stackrel{u \in V}{\rep[u] = w}}\dis_{G'}(v,u).$$
Due to the definition of $\rep$, only the
vertices which are connected to $v$ will have an effect on $\far'[v]$.
And due to Theorem~\ref{thm:rep}, each vertex can contribute to
at most one update. Hence 
$$\sum_{w \in V_{cid}}\sum_{\stackrel{u \in V} {\rep[u] =
w}}\dis_{G'}(v,u) = \sum_{\stackrel{u \in V}{\dis_{G'}(v,u) \neq
\infty}}\dis_{G'}(v,u),$$ which is the $\far'[v]$ in $G'$ as
desired.
\end{proof}

\begin{lemma}\label{lem:bcd2}
For each vertex $v \in V \setminus V_{cid}$, Algorithm~\ref{alg:combined} computes
the correct $\cc'[v]$ value. 
\end{lemma}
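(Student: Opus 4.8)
The plan is to show that the assignment on line~\ref{line:farup} is exactly correct, i.e. that $\far'[v] = \far[v] - (\far[r]-\far'[r])$ for $r = \rep[v]$, which I would establish by proving the per-vertex identity $\far'[v] - \far[v] = \far'[r] - \far[r]$ and then checking the two guard conditions in the \texttt{if} of the fix phase. I would first dispose of the trivial branches. If $\rep[v] = {\tt null}$, then $v$ sits in a connected component of $G'$ containing no vertex of $V_{cid}$; since the inserted edge $uv$ lies inside $G'_{cid}$, every distance out of $v$ is untouched, so $\far'[v] = \far[v]$ and the algorithm correctly skips $v$. If instead $\far[r] = \far'[r]$, the representative's distances did not move, and the argument below forces $v$'s farness to be unchanged as well, so again leaving $\far'[v] = \far[v]$ is correct.

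For the main branch ($r \neq {\tt null}$ and $\far[r] \neq \far'[r]$), I would partition the vertices $t$ according to the connected component of $\overline{G'_{cid}}$ they occupy. Call $t$ \emph{near} if it lies in the same component of $\overline{G'_{cid}}$ as $v$ — equivalently as $r$, since $\rep[v]=r$ places $v$ and $r$ in one component, and by Theorem~\ref{thm:rep} that component is unique — and \emph{far} otherwise. For a near vertex, a shortest $v \leadsto t$ path can be routed entirely inside that component and therefore avoids the edge $uv \in E'_{cid}$, so $\dis_{G'}(v,t) = \dis_G(v,t)$; the same holds for $r$. Hence every near vertex contributes $0$ to both $\far'[v]-\far[v]$ and $\far'[r]-\far[r]$.

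The heart of the proof is the far vertices, where I would invoke Corollary~\ref{cor:path}. Because a far $t$ lies in a different component of $\overline{G'_{cid}}$ from $v$, every $v \leadsto t$ path threads through $r = \rep[v]$, giving $\dis_{G'}(v,t) = \dis_{G'}(v,r) + \dis_{G'}(r,t)$, with the analogous split available in $G$. The prefix $v \leadsto r$ stays inside $\overline{G'_{cid}}$, so its length is insensitive to the insertion, $\dis_{G'}(v,r) = \dis_G(v,r)$; subtracting the two decompositions cancels this common offset and yields $\dis_{G'}(v,t) - \dis_G(v,t) = \dis_{G'}(r,t) - \dis_G(r,t)$ for every far $t$. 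Summing the near (zero) and far contributions over all reachable $t$ then gives $\far'[v]-\far[v] = \far'[r]-\far[r]$, which rearranges to precisely line~\ref{line:farup}, and $\cc'[v] = 1/\far'[v]$ follows. Since $r \in V_{cid}$, Lemma~\ref{lem:bcd1} supplies the correct $\far'[r]$ the fix phase reads, completing the chain.

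The step I expect to be the main obstacle is justifying that this per-vertex difference may be summed over a \emph{fixed} reachable set, i.e. that $v$ is connected to the same vertices in $G$ and in $G'$. This is immediate when $uv$ joins two vertices already in one connected component: no new reachability is created, and the decomposition $\dis_G(v,t) = \dis_G(v,r) + \dis_G(r,t)$ holds for every far $t$, so the offsets cancel cleanly as above. The delicate situation is when $uv$ \emph{bridges} two previously separate components: far vertices that were at distance $\infty$ from $v$ in $G$ suddenly become reachable, and each such newly reached $t$ carries the full offset $\dis_{G'}(v,r)$, a term that does \emph{not} appear in the update of $r$. I would therefore write the summation with the newly connected vertices separated out and argue explicitly that their aggregate contribution is accounted for through $r$'s recomputed farness in the same way as in the non-merging case; pinning down this bridging case — or restricting the statement to insertions that preserve the connected components — is where I anticipate the real work to lie.
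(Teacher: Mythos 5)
Your completed argument is, step for step, the paper's own proof: the paper also splits the vertices $w$ with $\rep[w] \neq {\tt null}$ according to whether they lie in $v$'s connected component of $\overline{G'_{cid}}$, observes that the ``near'' ones change neither $\far[v]$ nor $\far[r]$, applies Corollary~\ref{cor:path} to the ``far'' ones so that the common prefix $\dis_{G'}(v,r)$ cancels in the difference, and then invokes Lemma~\ref{lem:bcd1} for the value $\far'[r]$ consumed at line~\ref{line:farup}. So on the case it actually covers, your proposal and the paper coincide.

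The obstacle you flag at the end, however, is not a defect of your write-up relative to the paper --- it is a genuine gap that the paper's proof steps over silently. The paper's sentence ``since the first term does not change by the insertion of $uv$, the change on $\dis_{G'}(u,w)$ is equal to the change on $\dis_{G'}(v,w)$'' is only valid when both distances were already finite in $G$; when the inserted edge merges two connected components, each newly reachable $w$ contributes $\dis_{G'}(v,r) + \dis_{G'}(r,w)$ to $\far'[v]$ but only $\dis_{G'}(r,w)$ to $\far'[r]$, so line~\ref{line:farup} undercounts $\far'[v]$ by $\dis_{G'}(v,r)$ times the number of newly reachable vertices. A three-vertex example makes this concrete: let $G$ have the single edge $xa$ and an isolated vertex $b$, and insert $ab$. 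Then $V_{cid} = \{a,b\}$, $\rep[x] = a$, $\far[x] = \far[a] = 1$, and (taking the correct value guaranteed by Lemma~\ref{lem:bcd1}) $\far'[a] = 2$; the fix phase outputs $\far'[x] = 1 - (1 - 2) = 2$, whereas the true farness is $\dis_{G'}(x,a) + \dis_{G'}(x,b) = 1 + 2 = 3$. So the argument you hoped to make --- that the newly connected vertices are ``accounted for through $r$'s recomputed farness'' --- cannot be completed, because the claim is false in that regime; the correct resolution is exactly the restriction you propose, namely stating the lemma for insertions that do not alter the connected components (which is also the setting of the paper's experiments, where only non-bridge edges are deleted and re-inserted). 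Under that restriction your proof is complete and, on this point, more careful than the paper's own.
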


\begin{proof}
We will prove that $\far'[v]$ is correct for all $v \in V \setminus
V_{cid}$ after the fix phase. Let $u = \rep[v]$. If $u$ is {\tt
null} then $v$'s farness and hence closeness value will remain the
same.
%Since $v$ is not connected to the vertices in $V_{cid}$, this
%must be the case.

Assume that $u$ is not null. Let $w$ be a vertex with $\rep[w] \neq
{\tt null}$. If $w$ and $v$ are in the same connected component of
$\overline{G'_{cid}}$ then $\dis_G(v,w) = \dis_{G'}(v,w)$ and
$\dis_G(u,w) = \dis_{G'}(u,w)$. Hence, the change on $\far[v]$ and
$\far[u]$ due to $w$ are both $0$. On the other hand, if $w$
is in a different connected component of $\overline{G'_{cid}}$
according to Corollary~\ref{cor:path},
%\begin{align*}
$$\dis_{G'}(v,w) = \dis_{G'}(v,u) + \dis_{G'}(u,\rep[w]) + \dis_{G'}(\rep[w],w),$$
%\end{align*}
where the sum of the second and the third terms is equal
to $\dis_{G'}(u,w)$. Since the first term does not change by the
insertion of $uv$, the change on $\dis_{G'}(u,w)$ is equal to the
change on $\dis_{G'}(v,w)$. That is when aggregated, the change on
$\far[v]$ is equal to the change on $\far[u]$. Lemma~\ref{lem:bcd1}
implies that $\far[u]$ is correct. Hence, $\far'[v]$, computed at
line~\ref{line:farup}, must also be correct.
\end{proof}

\begin{theorem}
For each vertex $v \in V$, Algorithm~\ref{alg:combined} computes
the correct $\cc'[v]$ value. 
\end{theorem}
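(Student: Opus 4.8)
The plan is to obtain this final theorem as an immediate consequence of the two preceding lemmas. First I would observe that $V_{cid}$ and $V \setminus V_{cid}$ form a partition of $V$: every vertex $v \in V$ belongs to exactly one of these two sets. Since Lemma~\ref{lem:bcd1} establishes that Algorithm~\ref{alg:combined} produces the correct $\cc'[v]$ for every $v \in V_{cid}$, and Lemma~\ref{lem:bcd2} establishes the same for every $v \in V \setminus V_{cid}$, the two together cover the entire vertex set. Hence correctness for all $v \in V$ follows directly, and no new computation is required.

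The only point that warrants care is that the two lemmas are not logically independent: the fix phase at line~\ref{line:farup}, applied to a vertex $v \in V \setminus V_{cid}$, uses the updated farness $\far'[r]$ of its representative $r = \rep[v] \in V_{cid}$, and the correctness of that value is exactly what Lemma~\ref{lem:bcd1} supplies. I would therefore present the argument in the order dictated by this dependency: first invoke Lemma~\ref{lem:bcd1} to certify all farness values inside the biconnected component $G'_{cid}$, and only then invoke Lemma~\ref{lem:bcd2}, whose proof already consumes the guarantee of Lemma~\ref{lem:bcd1}.

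There is essentially no obstacle in this last step; all of the real work has been done in the two lemmas. If one wanted a fully self-contained statement, the mild extra care would be to confirm that the algorithm assigns $\cc'[v]$ exactly once for each vertex, and that vertices with $\rep[v] = {\tt null}$ (those disconnected from $V_{cid}$ in $\overline{G'_{cid}}$) are handled correctly: their farness is unchanged by the insertion of $uv$, so retaining $\cc'[v] = \cc[v]$ is valid, a case already folded into the first part of the proof of Lemma~\ref{lem:bcd2}. Finally, I would remark that the symmetric statement for edge deletions follows verbatim by replacing Theorem~\ref{thm:add} with its Corollary in the level-difference filter, so the identical partition argument yields correctness of the deletion variant of Algorithm~\ref{alg:combined} as well.
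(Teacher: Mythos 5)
Your proposal is correct and matches the paper's proof exactly: the paper simply states that the theorem ``follows from Lemma~\ref{lem:bcd1} and~\ref{lem:bcd2},'' i.e., the same partition of $V$ into $V_{cid}$ and $V \setminus V_{cid}$ covered by the two lemmas. Your additional remarks about the dependency order and the ${\tt null}$-representative case are sound elaborations of what the paper leaves implicit.
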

\begin{proof}
Follows from Lemma~\ref{lem:bcd1} and~\ref{lem:bcd2}.
\end{proof}

The complexity of the update algorithm is $\mathcal{O}(n(m+n))$. And
the overhead of filter preparation~(line~\ref{line:filter}
through~\ref{line:update}) is $\mathcal{O}(m+n)$ since it only
contains a constant number of graph traversals. In case of an edge
deletion, it is enough to get $G'_{cid}$ as the biconnected component
which was containing the deleted edge. The rest of the procedure can 
be adapted in a straightforward manner. 

\subsubsection{Filtering with identical vertices}

Our preliminary analyses on various networks show that some of the
graphs contain a significant amount of {\em identical} vertices which
have the same/a similar neighborhood structure. This can be exploited
to reduce the number of SSSPs further. We investigate two types of
identical vertices.

\begin{definition}
In a graph $G$, two vertices $u$ and $v$ are \em{type-I-identical} if and only if
$\Gamma_G(u) = \Gamma_G(v)$.
\end{definition}

\begin{definition}
In a graph $G$, two vertices $u$ and $v$ are \em{type-II-identical} if
and only if $\{u\} \cup \Gamma_G(u) = \{v\} \cup \Gamma_G(v)$.
\end{definition}

Both types form an equivalance
class relation since they are reflexive, symmetric, and
transitive. Furthermore, all the non-trivial classes they form (i.e.,
the ones containing more than one vertex) are
disjoint. 
%That is, if $u$ and $v$ are type-I-identical there cannot be
%another vertex $w$ which is type-II-identical to either $u$ and/or
%$v$. The reverse is also true.

Let $u, v \in V$ be two identical vertices. One can see that for any
vertex $w \in V \setminus \{u,v\}$, $\dis_{G}(u,w) =
\dis_{G}(v,w)$. Then the following is true.

\begin{corollary}
Let ${\cal I} \subseteq V$ be a vertex-class containing type-I or type-II
identical vertices. Then the closeness centrality values of all the
vertices in ${\cal I}$ are equal.
\end{corollary}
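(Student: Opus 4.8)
The plan is to reduce the statement to a pairwise claim and then compare farness sums term by term. Since type-I and type-II identity are each equivalence relations, the class $\cal I$ consists of pairwise identical vertices of a single type; hence it suffices to show that any two identical vertices $u, v \in \cal I$ satisfy $\cc[u] = \cc[v]$, and the corollary follows by applying this to every pair in $\cal I$. By the definition in~\eqref{eq:first} this in turn reduces to proving $\far[u] = \far[v]$, with the convention that both closeness values are $0$ when $u$ and $v$ reach no other vertex (the only such case being an isolated type-I pair).

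The key ingredient is the fact, already noted in the text, that $\dis_G(u,w) = \dis_G(v,w)$ for every $w \in V \setminus \{u,v\}$; I take this as given. (It is obtained by swapping the first edge of a shortest path, using $\Gamma_G(u) = \Gamma_G(v)$ in the type-I case and the matching closed neighborhoods in the type-II case.) First I would observe that this equality also matches the reachable index sets: for such $w$ we have $\dis_G(u,w) \neq \infty$ if and only if $\dis_G(v,w) \neq \infty$, so $u$ and $v$ reach exactly the same vertices outside $\{u,v\}$, and each reaches the other precisely when the two lie in a common component. Thus, away from the degenerate isolated case, the two farness sums range over matching index sets.

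Next I would split each farness sum into its ``third-vertex'' contribution and the single cross term. Assuming $u$ and $v$ lie in a common component, write
$$\far[u] = \dis_G(u,v) + \sum_{\stackrel{w \in V \setminus \{u,v\}}{\dis_G(u,w) \neq \infty}} \dis_G(u,w), \qquad \far[v] = \dis_G(v,u) + \sum_{\stackrel{w \in V \setminus \{u,v\}}{\dis_G(v,w) \neq \infty}} \dis_G(v,w),$$
where the $w=u$ and $w=v$ self-terms are $0$ and omitted. The two sums agree term by term by the key fact, while the cross terms agree because distance is symmetric, $\dis_G(u,v) = \dis_G(v,u)$. Hence $\far[u] = \far[v]$, and $\cc[u] = \cc[v]$ by~\eqref{eq:first}.

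The step I expect to require the most care is the bookkeeping around the two distinguished vertices themselves: the farness of $u$ counts a term for $v$ and vice versa, so one must verify that these leftover cross terms coincide rather than silently assuming the two sums are identical over all of $V$. Distance symmetry settles this cleanly. The only other wrinkle is the degenerate isolated type-I pair, which is dispatched directly by the $\cc[u]=0$ convention; every type-II pair is adjacent and every type-I pair with a nonempty neighborhood is joined through a common neighbor, so no non-isolated pair needs separate treatment.
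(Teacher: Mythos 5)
Your proposal is correct and follows the same route as the paper: the paper simply notes that identical vertices satisfy $\dis_G(u,w) = \dis_G(v,w)$ for all $w \in V \setminus \{u,v\}$ and states the corollary as an immediate consequence, which is exactly the key fact you build on. Your write-up merely fills in the bookkeeping the paper leaves implicit (the matching reachable sets, the cross terms $\dis_G(u,v) = \dis_G(v,u)$, and the isolated type-I pair handled by the $\cc = 0$ convention), all of which is handled correctly.
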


To construct these equivalance classes for the initial graph, we first
use a hash function to map each vertex neighborhood to an integer:
%begin{align*}
$hash_I[u] = \sum_{v \in \Gamma_G(u)} v$.
%end{align*}
We then sort the vertices with respect to their hash values and
construct the type-I vertex-classes by eliminating false positives due
to collisions on the hash function. 
%We repeat this procedure with the
%remaining~(singleton) vertices and the hash function
%$$\hash_{II}[u] = u + \sum_{v \in \Gamma_G(u)} v$$ to construct the
%type-II vertex classes.
A similar process is applied to detect type-II vertex classes. 
The complexity of this initial construction is
$\mathcal{O}(n \log n + m)$ assuming the number of collisions is small
and hence, false-positive detection cost is negligible.

Maintaining the equivalance classes in case of edge insertions and
deletions is easy: For example, when $uv$ is added to $G$, we first
subtract $u$ and $v$ from their classes and insert them to new
ones~(or leave them as singleton if none of the vertices are now
identical with them). The cost of this maintenance is $\mathcal{O}(n +
m)$. 

While updating closeness centralities of the vertices in $V$, we
execute an SSSP at line~\ref{line:source} of
Algorithm~\ref{alg:combined} for at most one vertex from each
class. For the rest of the vertices, we use the same closeness
centrality value. The improvement is straightforward and the
modifications are minor. For brevity, we do not give the
pseudocode.

\subsection{SSSP Hybridization}

The spike-shaped distribution given in Figure~\ref{fig:levels} can also
be exploited for SSSP hybridization. Consider the execution of
Algorithm~\ref{alg:cc}: while executing an SSSP with source $s$, for
each vertex pair $u,v$, $u$ is processed before $v$ if and only if
$d_G(s,u) < d_G(s,v)$. That is, Algorithm~\ref{alg:cc} consecutively
uses the vertices with distance $k$ to find the vertices with distance
$k+1$. Hence, it visits the vertices in a {\em top-down} manner.  SSSP
can also be performed in a a {\em bottom-up} manner. That is to say, after all
distance~(level) $k$ vertices are found, the vertices whose levels are
unknown can be processed to see if they have a neighbor at level
$k$. 

%The pseudocode of this variant is given in
%Algorithm~\ref{alg:bup}.
%\renewcommand{\baselinestretch}{0.70}
%\begin{algorithm}[htbp]
%\DontPrintSemicolon
%\SetKwComment{tcp}{$\triangleright$}{} \small
%\caption{CC computation using bottom-up SSSP}
%\label{alg:bup}
%\KwData{${G = (V,E)}$}  
%  \For{{\bf each} $s \in V$} {    
%    $U \leftarrow |V| \setminus \{s\}$, the set of unprocessed vertices\;
%    $\dis[v] \gets \infty$, $\forall v \in U$ \;
%    $\dis[s] \gets 0$ \; 
%    $\far[s] \gets 0$ \;    
%    $cont \gets {\bf true}$\;
%    \While{$cont$ is ${\bf true}$} {
%      $cont \gets {\bf false}$\;
%      \For{{\bf all} $v \in U$}{
%	\For{{\bf all} $w \in \adj{v}$}{
%          \If{$\dis[w] \neq \infty$}{
%	    $cont \gets {\bf true}$\;
%            $U \gets U \setminus \{v\}$\;
%            $\dis[v] \gets \dis[w] + 1$ \;
%	    $\far[s] \gets \far[s] + \dis[v]$\;
%          }
%	}
%      }
%    }
%    $\cc[s] = \frac{1}{\far[s]}$
%  }
%\end{algorithm}
%\renewcommand{\baselinestretch}{1}

Figure~\ref{fig:bfstimes} gives the execution times of bottom-up and
top-down SSSP variants for processing each level. The trend for
top-down resembles the shortest distance distribution in small-world
networks. This is expected since in each level $\ell$, the vertices
that are $\ell$ step far away from $s$ are processed. On the other
hand, for the bottom-up variant, the execution time is decreasing
since the number of unprocessed nodes is decreasing. Following the
idea of Beamer~et~al.~\cite{Beamer12}, we hybridize the SSSPs
throughout the centrality update phase in
Algorithm~\ref{alg:combined}. We simply compare the number of edges
need to be processed for each variant and choose the cheaper one. For
the case presented in Figure~\ref{fig:bfstimes}, the hybrid algorithm
is $3.6$ times faster than the top-down variant.
%Our preliminary experiments
%show that most of the improvement via hybridization is due to
%avoiding the peak execution times of the top-down variant.

\begin{figure}[htbp]
\vspace*{-1ex}
\center
\includegraphics[width=0.40\textwidth]{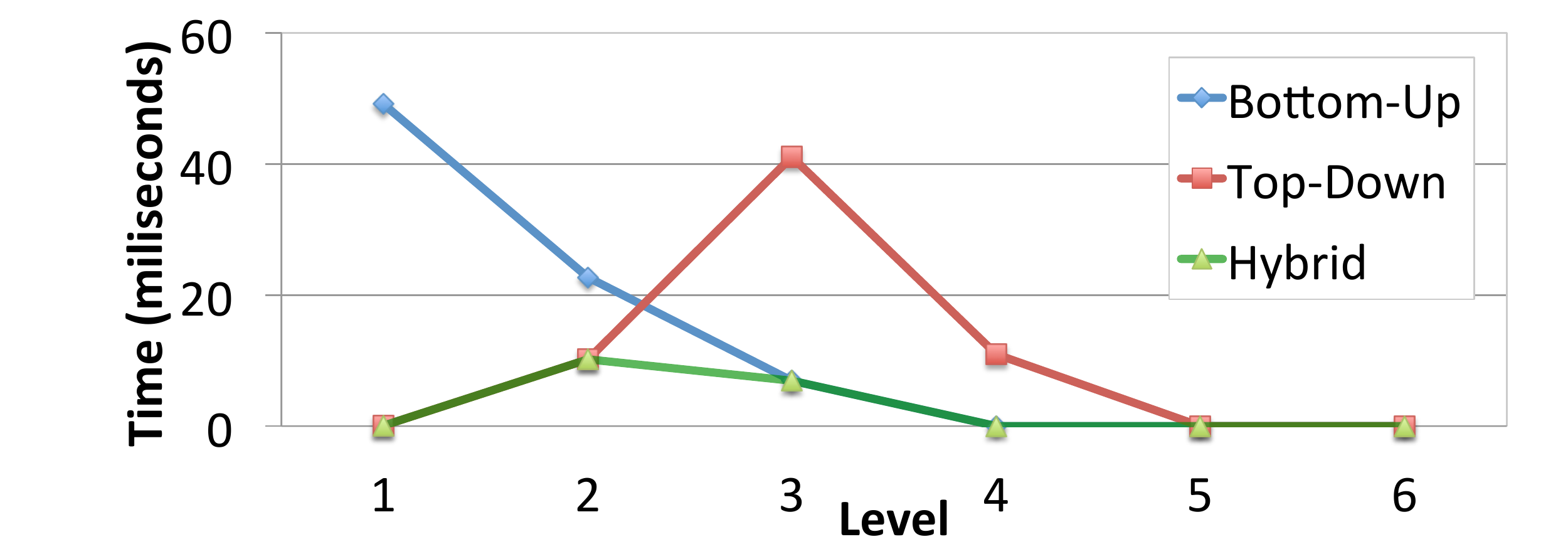}
\vspace*{-1ex}
\caption{\small Execution times of bottom-up, top-down, and hybrid SSSPs at
  each level for the {\it wiki-Talk} graph. The hybrid version is
  $3.63$ and $4.59$ times faster than the top-down and bottom-up
  versions respectively.
% The experiment is repeated 1,000 times and
%  the average value is used to stabilize the execution times.
}
\label{fig:bfstimes}
\vspace*{-1ex}
\end{figure}

\section{Related Work}\label{sec:rel}

To the best of our knowledge, there are only two works that deal with
maintaining centrality in dynamic networks. Yet, both are interested
in betweenness centrality. Lee~et~al. proposed the {\bf QUBE}
framework which updates betweenness centrality in case of edge
insertion and deletion within the network~\cite{Lee2012}. {\bf QUBE}
relies on the biconnected component decomposition of the graphs. Upon
an edge insertion or deletion, assuming that the decomposition does not
change, only the centrality values within the updated biconnected
component are recomputed from scratch. If the edge insertion/deletion
affects the decomposition the modified graph is decomposed into its
biconnected components and the centrality values in the affected part
are recomputed. The distribution of the vertices to the biconnected
components is an important criteria for the performance of {\bf
QUBE}. If a large component exists, which is the case for many
real-life networks, one should not expect a significant reduction on
update time. Unfortunately, the performance of {\bf QUBE} is only
reported on small graphs~(less than 100K edges) with very low edge
density. In other words, it only performs significantly well on small
graphs with a tree-like structure having many small biconnected
components.

Green~et~al. proposed a technique to update centrality scores rather
than recomputing them from scratch upon edge insertions~(can be
extended to edge deletions)~\cite{Green2012}. The idea is storing the
whole data structure used by the previous betweenness centrality
update kernel.  This storage is indeed useful for two main reasons: it
avoids a significant amount of recomputation since some of the
centrality values will stay the same. And second, it enables a partial
traversal of the graph even when an update is necessary.  However, as
the authors state, $\mathcal{O}(n^2 + nm)$ values must be kept on the
disk. For the Wikipedia user communication and DBLP
coauthorship networks, which contain
thousands of vertices and millions of edges, the technique by
Green~et~al. requires TeraBytes of memory. The largest graph used
in~\cite{Green2012} has approximately $20K$ vertices and $200K$ edges;
the quadratic storage cost prevents their storage-based techniques to
scale any higher. On the other hand, the memory footprint of our
algorithms are linear and hence they are much more practical.

\section{Experimental Results}\label{sec:exp}

We implemented our algorithms in {\tt C}. The code is compiled with {\tt gcc
v4.6.2} and optimization flags {\tt -O2 -DNDEBUG}. The graphs are kept
in memory in the compressed row storage~(CRS) format. The experiments
are run on a computer with two Intel Xeon E$5520$ CPU clocked at $2.27$GHz
and equipped with $48$GB of main memory. All the experiments are run sequentially.

For the experiments, we used $10$ networks from the UFL Sparse Matrix
Collection\footnote{\url{http://www.cise.ufl.edu/research/sparse/matrices/}} and we also
extracted the coauthor network from current set of DBLP papers. 
Properties of the graphs are summarized in
Table~\ref{tab:graph_prop}. 
%They are from different 
%application areas, such as social~({\it hep-th}, {\it PGPgiantcompo}, {\it astro-ph}, 
%{\it cond-mat-2005}, {\it soc-sign-epinions}, {\it loc-gowalla}, {\it amazon0601}, 
%{\it wiki-Talk}, {\it DBLP-coauthor}), and web networks~({\it web-NotreDame}, 
%{\it web-Google}). 
We symmetrized the directed graphs. The graphs are
listed by increasing number of edges and a distinction is made
between small graphs (with less than 500K edges) and the large graphs
(with more than 500K) edges.

\begin{table}
\smaller
\scalebox{0.85}{
 \begin{tabular}{|l|rr|rr|r|}
 \hline
 \multicolumn{3}{|c|}{\bf Graph}&\multicolumn{3}{c|}{{\bf Time} (in sec.)}\\ \hline
   name & {$|V|$} & {$|E|$} & Org. & Best & Speedup\\
 \hline
{\it hep-th} & 8.3K & 15.7K & 1.41 & 0.05 & 29.4 \\
{\it PGPgiantcompo} & 10.6K & 24.3K & 4.96 & 0.04 & 111.2 \\
{\it astro-ph} & 16.7K & 121.2K & 14.56 & 0.36 & 40.5 \\
{\it cond-mat-2005} & 40.4K & 175.6K & 77.90 & 2.87 & 27.2 \\ \hline
\multicolumn{5}{|r|}{\bf geometric mean}&{\bf 43.5}\\\hline
{\it soc-sign-epinions} & 131K & 711K & 778 & 6.25 & 124.5 \\
{\it loc-gowalla} & 196K & 950K & 2,267 & 53.18 & 42.6 \\
{\it web-NotreDame} & 325K & 1,090K & 2,845 & 53.06 & 53.6\\
{\it amazon0601} & 403K & 2,443K & 14,903 & 298 & 50.0 \\
{\it web-Google} & 875K & 4,322K & 65,306 & 824 & 79.2\\
{\it wiki-Talk} & 2,394K & 4,659K & 175,450 & 922 & 190.1 \\
{\it DBLP-coauthor} & 1,236K & 9,081K & 115,919 & 251 & 460.8\\\hline
 \multicolumn{5}{|r|}{\bf geometric mean}&{\bf 99.8}\\\hline
 \end{tabular}
}
\vspace*{-1ex}
\caption{\small The graphs used in the experiments. Column {\em Org.} shows
   the initial closeness computation time of $\ccalg$ and {\em Best}
   is the best update time we obtain in case of streaming
   data.}
 \label{tab:graph_prop}
\vspace*{-1ex}
\end{table}

%Although the filtering techniques can reduce the update cost
%significantly in theory, their practical effectiveness depends on the
%underlying structure of $G$. Since the diameter of the social networks
%are small, the range of the shortest distances is small. Furthermore,
%the distribution of these distances is unimodal. When the distance with the peak~(mode) is
%combined with the ones on its right and left, they cover a significant
%amount of the pairs~($56\%$ for {\em web-NotreDame}, $65\%$ for {\em
%web-Google}, $79\%$ for {\em amazon0601}, and $91\%$ for {\em
%soc-sign-epinions}). We expect the filtering procedure to have a
%significant impact on social networks because of their
%structure. Besides, that specific structure is also important for the
%SSSP hybridization.

\subsection{Handling topology modifications}

To assess the effectiveness of our algorithms, we need to know that when each
edge is inserted to/deleted from the graph. Our datasets from UFL Sparse Matrix
Collection do not have this information. To conduct our experiments on
these datasets, we delete 1,000 edges from a graph chosen randomly in the following
way: A vertex $u \in V$ is selected randomly~(uniformly), and a vertex $v \in \Gamma_G(u)$
is selected randomly~(uniformly). Since we do not want to change the
connectivity in the graph~(having disconnected components can make
our algorithms much faster and it will not be fair to \ccalg), we discard $uv$
if it is a bridge. If this is not the case we delete it from $G$
and continue. We construct the initial graph by deleting these 1,000
edges. Each edge is then inserted one by one, and
our algorithms are used to recompute the closeness centrality after each
insertion. Beside these random insertion experiments, we also evaluated
our algorithms on a real temporal dataset of the DBLP coauthor
graph\footnote{\url{http://www.informatik.uni-trier.de/~ley/db/}}. In
this graph, there is an edge between two authors if
they published a paper. Publication dates are used as timestamps
of edges. We first constructed the graph for the papers published before
January 1, 2013. Then, we inserted the coauthorship edges of the papers since then.
Although our experiments perform edge insertion, edge
deletion is a very similar process which should give comparable
results.

In addition to \ccalg, we configure our algorithms in four different ways:
\ccalgb only uses biconnected component decomposition (BCD), \ccalgbl uses
BCD and filtering with levels, \ccalgbli uses all three work
filtering techniques including identical vertices. And
\ccalgblih uses all the techniques described in this paper including
SSSP hybridization. 

Table~\ref{tab:results} presents the results of the experiments.The
second column, \ccalg, shows the time to run the full Brandes
algorithm for computing closeness centrality on the original version
of the graph. Columns $3$--$6$ of the table present absolute
runtimes~(in seconds) of the centrality computation algorithms.  The
next four columns, $7$--$10$, give the speedups achieved by each
configuration. For instance, on the average, updating the closeness
values by using \ccalgb on {\em PGPgiantcompo} is $11.5$ times faster
than running \ccalg. Finally the last column gives the overhead of our
algorithms per edge insertion, i.e., the time necessary to detect the
vertices to be updated, and maintain BCD and identical-vertex
classes. Geometric means of these times and speedups are also given to
provide comparison across instances.

\begin{table*}
\center
\smaller
\scalebox{0.90}{
 \begin{tabular}{|l|r|rrrr|rrrr|r|}
 \hline
 &\multicolumn{5}{|c|}{\bf Time~(secs)}&\multicolumn{4}{|c|}{\bf
 Speedups}& \multicolumn{1}{|c|}{\bf Filter}\\
{\bf Graph} & \ccalg & \ccalgb & \ccalgbl & \ccalgbli & \ccalgblih &
 \ccalgb & \ccalgbl & \ccalgbli & \ccalgblih & {\bf time~(secs)}\\\hline
{\it hep-th} & 1.413	&	0.317	&	0.057	&	0.053 &	0.048	&	4.5	&	24.8	&	26.6	& 29.4	& 0.001\\
{\it PGPgiantcompo}& 4.960	&	0.431	&	0.059	& 0.055	&	0.045	&	11.5	&	84.1	&	89.9 &	111.2	& 0.001\\
{\it astro-ph} & 14.567	&	9.431	&	0.809	&	0.645 &	0.359	&	1.5	&	18.0	&	22.6	&  40.5	& 0.004\\
{\it cond-mat-2005}& 77.903	&	39.049	&	5.618	& 4.687	&	2.865	&	2.0	&	13.9	&	16.6 &	27.2	& 0.010\\\hline
Geometric mean & 9.444 & 2.663 & 0.352 & 0.306 & 0.217 & 3.5 & 26.8 & 30.7 & 43.5 & 0.003\\\hline
{\it soc-sign-epinions} & 778.870	&	257.410		&	20.603 	&	19.935	&	6.254	&	3.0	&	37.8	&  	39.1		& 124.5	& 0.041\\
{\it loc-gowalla}& 2,267.187		&	1,270.820		&	132.955	&	135.015	&	53.182	&	1.8	&	17.1	&	16.8		& 42.6	&0.063\\
{\it web-NotreDame} & 2,845.367	&	579.821		&	118.861	& 	83.817	&	53.059	&	4.9	&	23.9	&	33.9 		& 53.6	& 0.050\\
{\it amazon0601} & 14,903.080		&	11,953.680	&      	540.092	&	551.867	&	298.095	&	1.2	&      	27.6	&	27.0		& 50.0	& 0.158\\
{\it web-Google} & 65,306.600		&	22,034.460	&      	2,457.660	&	1,701.249	&	824.417	&	3.0	& 	26.6	&	38.4		& 79.2	& 0.267\\
{\it wiki-Talk} & 175,450.720		&	25,701.710	&   	2,513.041 &  	2,123.096	& 	922.828 	&	6.8	&	69.8	& 	82.6		& 190.1	& 0.491\\
{\it DBLP-coauthor} &  115,919.518  & 	18,501.147	&  	288.269	& 	251.557	& 	252.647 	&	6.2	& 	402.1& 	460.8	& 458.8	& 0.530\\\hline
Geometric mean & 13,884.152		&	4,218.031		& 	315.777 	&	273.036 	&	139.170 	&	3.2	& 	43.9	 & 	50.8		& 99.7	& 0.146\\\hline

\end{tabular}
}
\vspace*{-1ex}
\caption{\small Execution times in seconds of all the algorithms and speedups
when compared with the basic closeness centrality algorithm \ccalg. In
the table \ccalgb is the variant which uses only BCDs, \ccalgbl uses
BCDs and filtering with levels, \ccalgbli uses all three work
filtering techniques including identical vertices. And
\ccalgblih uses all the techniques described in this paper including
SSSP hybridization. }
\label{tab:results}
\vspace*{-1ex}
\end{table*}

The times to compute closeness centrality using \ccalg on the small
graphs range between $1$ to $77$ seconds. On large
graphs, the times range from $13$ minutes to $49$ hours. Clearly,
\ccalg is not suitable for real-time network analysis and management
based on shortest paths and closeness
centrality. When all the techniques are used~(\ccalgblih), the time
necessary to update the closeness centrality values of the small
graphs drops below $3$ seconds per edge insertion. The improvements
range from a factor of $27.2$~({\em cond-mat-2005}) to $111.2$~({\em PGPgiantcompo}),
with an average improvement of $43.5$ across small
instances. On large graphs, the update time per insertion drops below
$16$ minutes for all graphs. The improvements
range from a factor of $42.6$~({\em loc-gowalla}) to $458.8$~({\em DBLP-coauthor}),
with an average of $99.7$.  For all
graphs, the time spent filtering the work is below one second which
indicates that the majority of the time is spent for SSSPs. Note that
this part is pleasingly parallel since each SSSP is independent from
each other.

The overall improvement obtained by the proposed
algorithms is very significant. The speedup obtained by using
BCDs~(\ccalgb) are $3.5$ and $3.2$ on the average for small and large
graphs, respectively. The graphs {\em PGPgiantcompo}, and
{\em wiki-Talk} benefits the most from BCDs~(with speedups
$11.5$ and $6.8$, respectively). Clearly using the biconnected
component decomposition improves the update performance. However,
filtering by level differences is the most efficient technique: \ccalgbl brings
major improvements over \ccalgb. For all social
networks, \ccalgbl increased the performance when compared with \ccalgb,
the speedups range from $4.8$~({\em web-NotreDame}) to
$64$~({\em DBLP-coauthor}). Overall, \ccalgbl brings a $7.61$
improvement on small graphs and a $13.44$ improvement on large graphs
over \ccalg. 

For each added edge $uv$, let $X$ be the random variable equal to
$|\dis_G(u,w) - \dis_G(v,w)|$. By using 1,000 $uv$ edges, we computed
the probabilities of the three cases we investigated before and give
them in Fig.~\ref{fig:leveldiffs01}. For each graph in the figure, the
sum of first two columns gives the ratio of the vertices not updated
by \ccalgbl. For the networks in the figure, not even $20\%$ of
the vertices require an update~($Pr(X > 1)$). This explains the
speedup achieved by filtering using level differences.
Therefore, level filtering is more useful for the graphs having 
characteristics similar to small-world networks.

\begin{figure}[htbp]
\vspace*{-1ex}
\center
\includegraphics[width=0.44\textwidth]{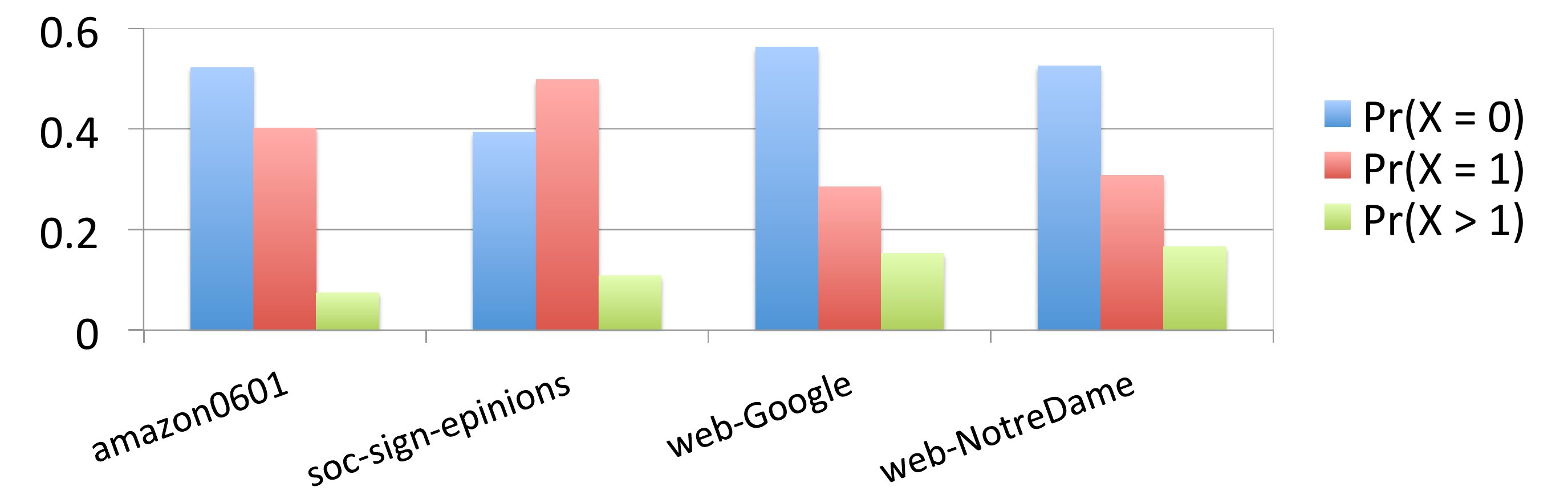}
\vspace*{-1ex}
\caption{\small The bars show the distribution of random variable $X =
  |\dis_G(u,w) - \dis_G(v,w)|$ into three cases we investigated when
  an edge $uv$ is added. %To construct the numbers, 1,000 edges  are
                         %used for each graph.
}
\label{fig:leveldiffs01}
\vspace*{-1ex}
\end{figure}

Filtering with identical vertices is not as useful as the other two
techniques in the work filter. Overall, there is a $1.15$ times
improvement with \ccalgbli on both small and large graphs compared
to \ccalgbl. For some graphs, such as {\em web-NotreDame}
and {\em web-Google}, improvements are much
higher~($30\%$ and $31\%$, respectively).  

Finally, the hybrid implementation of SSSP also proved to be
useful. \ccalgblih is faster than \ccalgbli by a factor of $1.42$ on
small graphs and by a factor of $1.96$ on large graphs. Although it
seems to improve the performance for all graphs, in some few cases,
the performance is not improved significantly.
This can be attributed
to incorrect decisions on SSSP variant
to be used. Indeed, we did not
benchmark the architecture
to discover the proper
parameter. \ccalgblih performs the best on social network graphs with
an improvement ratio of $3.18$~({\em soc-sign-epinions}), $2.54$~({\em
loc-gowalla}), and $2.30$~({\em wiki-Talk}).

All the previous results present the average update time for 1,000
successively added edges. Hence, they do not say anything about the
variance. Figure~\ref{fig:updatedist} shows the runtimes of \ccalgb
and \ccalgblih per edge insertion for {\em web-NotreDame} in a sorted
order. The runtime distribution of \ccalgb clearly has multiple
modes. Either the runtime is lower than $100$ milliseconds or it is
around $700$ seconds. We see here the benefit of BCD. According to
the runtime distribution, about $59\%$ of {\em web-NotreDame}'s
vertices are inside small biconnected components. Hence, the time per
edge insertion drops from 2,845 seconds to 700. 
%Besides, removing
%edges within the small components incurs a much lower cost since the
%traversals are limited to the biconnected component itself. The other
%added edges are within the large biconnected component and take longer
%to process, about $700$ seconds. Notice that this time is still
%significantly lower than the 2,845 seconds required by the original
%\ccalg. 
Indeed, the largest component only contains $41\%$ of the
vertices and $76\%$ of the edges of the original graph. The decrease
in the size of the components accounts for the gain of performance.

\begin{figure}[htbp]
\vspace*{-1ex}
\center
\includegraphics[width=0.44\textwidth]{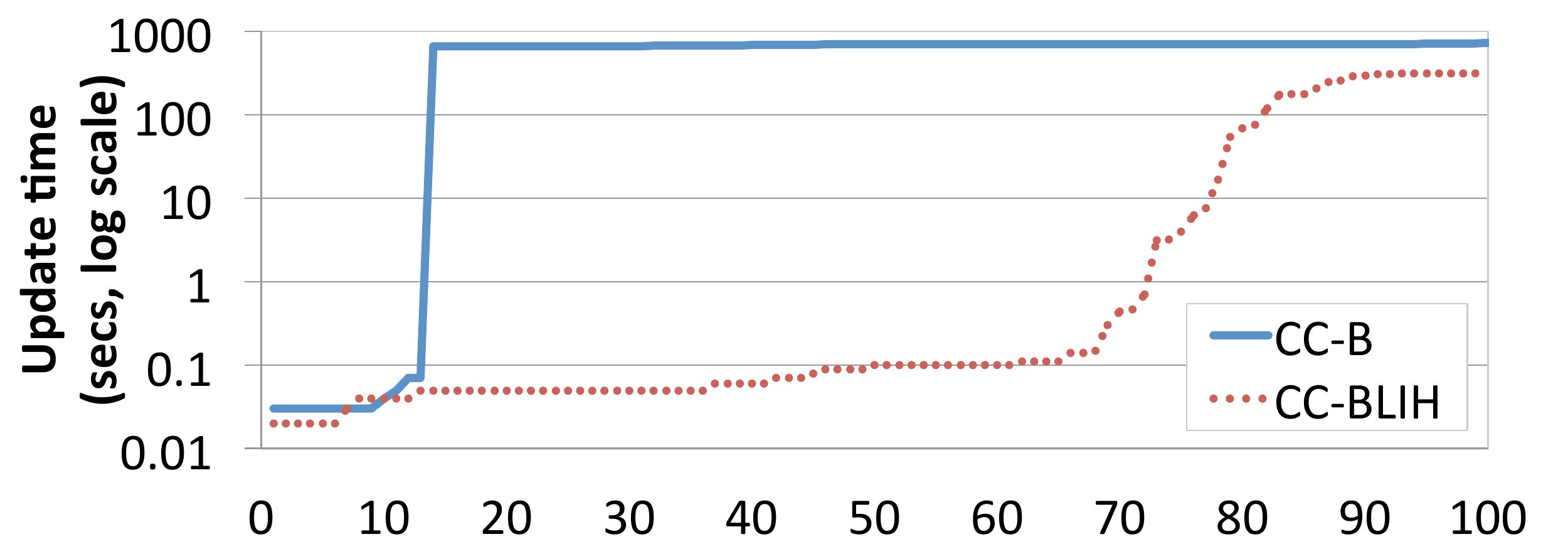}
\vspace*{-1ex}
\caption{\small Sorted list of the runtimes per edge insertion for the first
  $100$ added edges of {\em web-NotreDame}.}
\label{fig:updatedist}
\vspace*{-1ex}
\end{figure}

The impact of level filtering can also be seen on
Figure~\ref{fig:updatedist}. $60\%$ of the edges in the main
biconnected component do not change the closeness values of many
vertices and the updates that are induced by their addition take less
than $1$ second. The remaining edges trigger more expensive updates
upon insertion. Within these $30\%$ expensive edge insertions,
identical vertices and SSSP hybridization provide a significant
improvement~(not shown in the figure).

\paragraph{Better Speedups on Real Temporal Data}

The best speedups are obtained on the DBLP coauthor network, which
uses real temporal data.  Using \ccalgb, we reach $6.2$ speedup w.r.t.
\ccalg, which is bigger than the average speedup on all networks. Main
reason for this behavior is that $10\%$ of the inserted edges are
actually the new vertices joining to the network, i.e., authors with
their first publication, and \ccalgb handles these edges quite
fast. Applying \ccalgbl gives a $64.8$ speedup over \ccalgb, which is
drastically higher than on all other graphs.  Indeed, only $0.7\%$ of
the vertices require to run a SSSP algorithm when an edge is inserted
on the DBLP network. For the synthetic cases, this number is $12\%$.
\ccalgbli provides similar speedups with random insertions and
\ccalgblih does not provide speedups because of the structure of the graph. Overall,
speedups obtained with real temporal data reaches $460.8$, i.e., $4.6$ times
greater than the average speedup on all graphs. Our algorithms appears
to perform much better on real applications than on synthetic ones.   

\subsection{Summary}

All the techniques presented in this paper allow to update closeness
centrality faster than the non-incremental algorithm presented
in~\cite{brandes2001} by a factor of $43.5$ on small graphs and $99.7$
on large ones. Small-world networks such as social networks benefit
very well from the proposed techniques. They tend to have a
biconnected component structure that allow to gain some improvement
using \ccalgb. However, they usually have a large biconnected
component and still, most of the gain is derived from exploiting their
spike-shaped distance distribution which brings at least a factor of
$13.4$. Identical vertices typically brings a small amount of
improvement but helps to increase the performance during expensive
updates. Using all the techniques, we achieved to reduce the closeness
centrality update time from $2$ days to $16$ minutes for the graph
with the most vertices in our dataset~({\em wiki-Talk}). And for the
temporal DBLP coauthorship graph, which has the most edges, we reduced
the centrality update time from 1.3 days to 4.2 minutes.

\section{Conclusion}\label{sec:con}

% \todo{twitter graph}
% \todo{maintenance time}
% \todo{discussion on other centrality metrics}
% \todo{discussion on directedness}
% \todo{discussion on weightedness: compare difference of distances with
% edge weights}
% \todo{variants of closeness centrality problems}
% \todo{applicability to indexes, and compressed graphs, distributed
%   memory}
% \todo{edge deletion}
% \todo{batch processing}

In this paper we propose the first algorithms to achieve fast updates
of exact centrality values on incremental network modification at such
a large scale.  Our techniques exploit the biconnected component decomposition
of these networks, their spike-shaped shortest-distance distributions,
and the existence of nodes with identical neighborhood. In large
networks with more than $500K$ edges, our techniques proved to bring a
$99$ times speedup in average. 
%The largest network we experimented on
%contains millions of nodes and edges. It would take $2$ days to
%compute closeness centrality from scratch, yet our techniques allow to
%update closeness centrality values in less than $16$ minutes. For a
%real-life temporal dataset of DBLP-coauthor graph, we get even better
%speedups reaching $458$. 
With a speedup of 458, the proposed techniques may even allow DBLP to
reflect the impact on centrality of the papers published in quasi
real-time. Our algorithms will serve as a fundamental building block
for the centrality-based network management problem, closeness
centrality computations on dynamic/streaming networks, and their
temporal analysis.

The techniques presented in this paper can directly be extended in two
ways. First, using a statistical sampling to compute an approximation
of closeness centrality only requires a minor adaptation on the SSSP
kernel to compute the contribution of the source vertex to other
vertices instead of its own centrality. Second, the techniques
presented here also apply to betweenness centrality with minor
adaptations.

As a future work, we plan to investigate
local search techniques for the centrality-based network management problem
using our incremental centrality computation algorithms.

%% As a future work, we have multiple directions. Firstly, we plan to focus on edge selection 
%% strategies for the Generic Centrality Management problem. In~\cite{Ishakian2012}, authors
%% propose some greedy techniques for inserting edges to maximize centrality of specified
%% vertices. However, Generic Centrality Management problem is harder and requires more
%% complex algorithms. Our techniques proposed in this paper will help us while designing
%% such algorithms. Second, we will investigate faster solutions for incremental
%% closeness centrality computation. Building an adaptive system which is doing statistical
%% sampling for high load of updates and doing exact computation for lower loads will be
%% beneficial for real world streaming applications.

\section{Acknowledgments}
This work was supported in parts by the DOE grant DE-FC02-06ER2775 and by
the NSF grants CNS-0643969, OCI-0904809, and OCI-0904802.

\bibliographystyle{abbrv}

\end{document}